\newcommand{\bp}{\begin{proof} \small }
\newcommand{\ep}{\end{proof} \normalsize}
\newcommand{\epx}{\end{proof} \small}
\newcommand{\bpa}{\begin{proofappx} \footnotesize }
\newcommand{\epa}{\end{proofappx} \small }
\newtheorem{proposition}{Proposition}
\newtheorem{lemma}{Lemma}
\newtheorem{definition}{Definition}
\newtheorem*{theorem*}{Theorem}
\newtheorem*{proposition*}{Proposition}
\newtheorem*{corollary*}{Corollary}
\newtheorem*{lemma*}{Lemma}
\newtheorem*{assumption*}{Assumption}
\newtheorem*{definition*}{Definition}
\newtheorem*{claim*}{Claim}
\newcommand{\be}{\begin{equation}}
\newcommand{\ee}{\end{equation}}
\newcommand{\bs}{\begin{subequations}}
\newcommand{\es}{\end{subequations}}
\newcommand{\bq}{\begin{eqnarray}}
\newcommand{\eq}{\end{eqnarray}}
\newcommand{\bqn}{\begin{eqnarray*}}
\newcommand{\eqn}{\end{eqnarray*}}
\newcommand{\ba}{\left[ \begin{array}}
\newcommand{\ea}{\\ \end{array} \right]}
\newcommand{\ben}{\begin{enumerate}}
\newcommand{\een}{\end{enumerate}}
\def\real{{\mathchoice%
{\hbox{\rm\setbox1=\hbox{I}\copy1\kern-.45\wd1 R}}
{\hbox{\rm\setbox1=\hbox{I}\copy1\kern-.45\wd1 R}}
{\hbox{\scriptsize\rm\setbox1=\hbox{I}\copy1\kern-.45\wd1 R}}
{\hbox{\scriptsize\rm\setbox1=\hbox{I}\copy1\kern-.45\wd1 R}}}}
\def\Zint{{\mathchoice{\setbox1=\hbox{\sf Z}\copy1\kern-.75\wd1\box1}
{\setbox1=\hbox{\sf Z}\copy1\kern-.75\wd1\box1}
{\setbox1=\hbox{\scriptsize\sf Z}\copy1\kern-.75\wd1\box1}
{\setbox1=\hbox{\scriptsize\sf Z}\copy1\kern-.75\wd1\box1}}}
\newcommand{\complex}{ \hbox{\rm C\kern-0.45em\rule[.07em]{.02em}{.58em}%
\kern 0.43em}}
\begin{document}
%
\title{Bandwidth Allocation for Multiple Federated Learning Services in Wireless Edge Networks}
%
%
%

\author{Jie~Xu, Heqiang~Wang, Lixing~Chen
\thanks{J. Xu, H. Wang and L. Chen are with the Department of Electrical and Computer Engineering, University of Miami, Coral Gables, FL, USA. }
}

\maketitle

\begin{abstract}
This paper studies a federated learning (FL) system, where \textit{multiple} FL services co-exist in a wireless network and share common wireless resources. It fills the void of wireless resource allocation for multiple simultaneous FL services in the existing literature. Our method designs a two-level resource allocation framework comprising \emph{intra-service} resource allocation and \emph{inter-service} resource allocation. The intra-service resource allocation problem aims to minimize the length of FL rounds by optimizing the bandwidth allocation among the clients of each FL service. Based on this, an inter-service resource allocation problem is further considered, which distributes bandwidth resources among multiple simultaneous FL services. We consider both cooperative and selfish providers of the FL services. For cooperative FL service providers, we design a distributed bandwidth allocation algorithm to optimize the overall performance of multiple FL services, meanwhile cater to the fairness among FL services and the privacy of clients. For selfish FL service providers, a new auction scheme is designed with the FL service owners as the bidders and the network provider as the auctioneer. The designed auction scheme strikes a balance between the overall FL performance and fairness. Our simulation results show that the proposed algorithms outperform other benchmarks under various network conditions.

\end{abstract}


%
\IEEEpeerreviewmaketitle

\section{Introduction}
Today's mobile devices are generating an unprecedented amount of data every day. Leveraging the recent success of machine learning (ML) and artificial intelligence (AI), this rich data has the potential to power a wide range of new functionalities and services, such as learning the activities of smart phone users, predicting health events from wearable devices or adapting to pedestrian behavior in autonomous vehicles. With the help of multi-access edge computing (MEC) servers, ML models can be quickly trained/updated using this data to adapt to the changing environment without moving the data to the remote cloud data center, which is envisioned in intelligent next-generation communication systems \cite{park2019wireless}. Furthermore, due to the growing storage and computational power of mobile devices as well as privacy concerns associated with uploading personal data, it is increasingly attractive to store and process data directly on mobile devices. Federate learning (FL) \cite{konevcny2016federated} is thus proposed as a new distributed ML framework, where mobile devices collaboratively train a shared ML model with the coordination of an edge server while keeping all the training data on device, thereby decoupling the ability to do ML from the need to upload/store the data to/in a public entity. 

A typical FL service involves a number of mobile devices (a.k.a., participating clients) and an edge server (a.k.a., a parameter server) to train a ML model, which lasts for a number of learning rounds. In each round, the clients download the current ML model from the server, improve it by learning from their local data, and then upload the individual model updates to the server; the server then aggregates the local updates to improve the shared model. For example, the seminal work \cite{mcmahan2017communication} proposed the FedAvg algorithm in which the global model is obtained by averaging the parameters of local models. Although other FL algorithms differ in the specifics, the majority of them follow the same procedure. Because the clients work in the same wireless network to download and upload models, how to allocate the limited wireless bandwidth among the participating clients has a crucial impact on the resulting FL speed and efficiency. Therefore, resource allocation for wireless FL systems is attracting much attention recently in the wireless communications community \cite{chen2020joint, tran2019federated}. Compared to resource allocation in traditional throughput-maximizing wireless networks, the resource allocation objective and outcome become considerably different for wireless FL due to its unique requirements and characteristics. 

\begin{figure}[ht]
	\centering
	\includegraphics[width=0.55\textwidth]{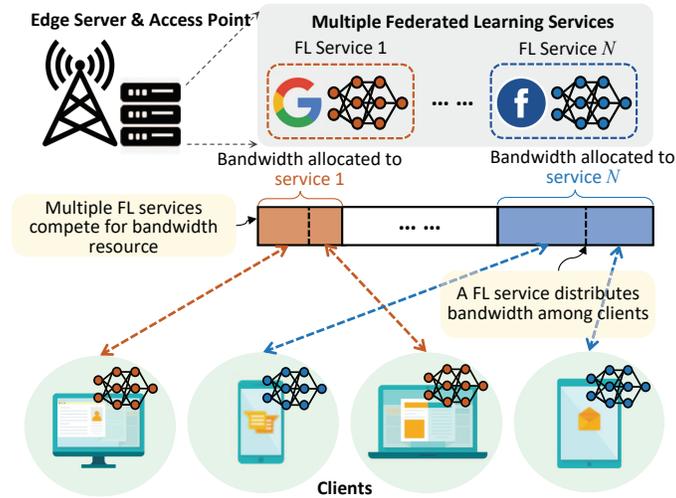}
	\caption{\small System Overview.}
		\vspace{-0.1in}
	\label{fig:system}
\end{figure}

Although existing works have made meaningful progress towards efficient resource allocation for wireless FL, they share the common limitation that only a \textit{single} FL service was considered. As ML-powered applications grow and become more diverse, it is anticipated that the wireless network will host \textit{multiple} co-existing FL services, the set of which may also dynamically change over time. See Figure \ref{fig:system} for an illustration of the multi-FL service scenario. The presence of multiple FL services makes resource allocation for wireless FL much more challenging. \textbf{First}, the achievable FL performance depends on not only \textit{intra-service} resource allocation among the participating clients within each FL service but also \textit{inter-service} resource allocation among different FL services, and these two levels of allocation decisions are also strongly coupled. \textbf{Second}, the FL service providers may adopt different FL algorithms and choose different configurations (e.g., number of participating clients, number of epochs of local training, etc.), yet this information is not always available to the wireless network operator due to privacy concerns when making resource allocation decisions.  \textbf{Third}, because FL service providers have their individual goals, they may have incentives to untruthfully reveal their valuation of the wireless bandwidth if by doing so they gain advantages in the inter-service bandwidth allocation. Without the correct information, there is no guarantee on the overall system performance. \textbf{Finally}, as in any multi-user system, resource allocation should strike a good balance between efficiency and fairness -- every FL service provider should obtain a reasonable share of the wireless resource to train their ML models using FL. 

In this paper, we make an initial effort to study wireless FL with multiple co-existing FL services, which share the same bandwidth to train their respective ML models. Our focus is on the efficient bandwidth allocation among different FL services as well as among the participating clients within each FL service, thereby understanding the interplay between these two levels of allocation decisions. Our main contributions are summarized as follows:
\begin{itemize}
    \item We formalize a two-level bandwidth allocation problem for multiple FL services co-existing in the wireless network, which may start and complete at different time depending on their own demand and FL requirements. The model is general enough for any FL algorithm that involves downloading, local learning, uploading and global aggregation in each learning round, and hence has wide applicability in real-world systems. In addition, we explicitly take fairness into consideration when optimizing bandwidth allocation to ensure that no FL service is starved of bandwidth. 
    \item We consider two use cases depending on the nature/goals of the FL service providers. In the first case, FL service providers are fully cooperative to maximize the overall system performance. For this, we design a distributed optimization algorithm based on dual decomposition to solve the two-level bandwidth allocation problem. The algorithm keeps all FL-related information at the individual FL service provider side without sharing it with the network operator, thereby reducing the communication overhead and enhancing privacy protection.
    \item We further consider a second case where FL service providers are selfishly maximizing their own performance. To address the selfishness issue, we design a multi-bid auction mechanism, which is able to elicit the FL service providers' truthful valuation of bandwidth based on their submitted bids. With a fairness-adjusted \textit{ex post} charge, the proposed auction mechanism is able to make a tunable trade-off between efficiency and fairness. 
\end{itemize}
The rest of this paper is organized as follows. Section II discusses related works. Section III builds the system model. Section IV formulates the problem for the cooperative case and develops a distributed bandwidth allocation algorithm. Section V studies the selfish service providers case and develops a multi-auction mechanism. Section VI performs simulations. Concluding remarks are made in Section VII. 

\section{Related Work}
A lot of research has been devoted to tackling various challenges of FL, including but not limited to developing new optimization and model aggregation methods \cite{karimireddy2019scaffold, li2018federated, haddadpour2019convergence}, handling non-i.i.d. and unbalanced datasets \cite{li2019convergence, zhao2018federated, sattler2019robust}, dealing with the straggler problem \cite{smith2017federated}, preserving model and data privacy \cite{fung2018mitigating, kim2019blockchained}, and ensuring fairness \cite{mohri2019agnostic, li2019fair}. A comprehensive review of these challenges can be found in \cite{li2020federated, lim2020federated, yang2019federated}. In particular, the communication aspect of FL has been recognized as a primary bottleneck due to the tension between uploading a large amount of model data for aggregation and the limited network resource to support this transmission, especially in a wireless environment. In this regard, early research on communication-efficient FL largely focuses on reducing the amount of transmitted data while assuming that the underlying communication channel has been established, e.g., updating clients with significant training improvement \cite{chen2018lag}, compressing the gradient vectors via quantization \cite{lin2017deep}, or accelerating training using sparse or structured updates \cite{aji2017sparse}. More recent research starts to address this problem from a more communication system perspective, e.g., using a hierarchical FL network architecture \cite{liu2020client} that allows partial model aggregation, and leveraging the wireless transmission property to perform analog model aggregation over the air \cite{yang2020federated}. 

As wireless networks are envisioned as a main deployment scenario of FL, wireless resource allocation for FL is another active research topic. Many existing works \cite{wang2019adaptive, zhan2020experience, tran2019federated} study the trade-off between local model update and global model aggregation. Client selection is essential to enable FL at scale and address the straggler problem. Different types of joint bandwidth allocation and client scheduling policies \cite{ xu2020client, zeng2020energy, shi2020device, nishio2019client, chen2020convergence} have been proposed to either minimize the training loss or the training time. In all these works, resource allocation is carried out among clients of a \textit{single} FL service, while assuming that the FL service itself has already received dedicated resource. In stark contrast, our paper studies a network consisting of multiple co-existing FL services and performs resource allocation at both the FL service level and the client level. We notice that a related problem where multiple FL services are being trained at the same time is also considered in a recent work \cite{nguyen2020toward}. In that paper, different FL services run on the same set of clients and a joint computation and communication resource scheduling problem is studied. In our paper, different FL services have their separate client sets which may experience very different channel qualities and hence we focus only on the bandwidth allocation problem. Moreover, while \cite{nguyen2020toward} assumed that all clients are obedient, we study the possible selfish nature of FL service providers and highlight bandwidth allocation fairness. 

Considering each FL service as a ``user'', our problem is a special type of resource allocation problems for multi-user wireless networks. While many concepts and techniques adopted in this paper, including proportional fairness \cite{massoulie1999bandwidth}, dual decomposition \cite{palomar2006tutorial} and multi-bid auction \cite{maille2004multibid}, have seen applications in other multi-user wireless resource allocation domains, applying them in multi-service FL requires special treatment as two levels of resource allocation are involved in our problem. 
In particular, there is no closed-form expression of how the performance (i.e., learning speed) of a FL service depends on the resource allocation among its clients. Therefore, understanding the inter-dependency of intra-service and inter-service bandwidth allocation is essential. Furthermore, we put an emphasis on the resource fairness among different FL services by designing a new fairness-adjusted multi-bid auction mechanism in the selfish FL service provider case, thereby achieving a tunable tradeoff between efficiency and fairness. We point out that there are some existing works \cite{feng2019joint, sarikaya2019motivating, kang2019incentive, le2020auction} on designing incentive mechanisms for client participation of a single FL service. These works are very different from our paper in terms of both the problem and the approaches, and do not consider fairness when designing the mechanism.

\section{System Model}
We consider a wireless network where machine learning models are trained using Federated Learning (FL). The wireless network has a total bandwidth $B$, and the network operator has to allocate this bandwidth among concurrent FL services when needed to enable their individual training. Because new FL services may start and old FL services may finish over time, bandwidth allocation has to be periodically performed to adapt to the current active FL services. Therefore, we divide time into periods and let the length of a period be $T$. At the beginning of each period $i$, a set $\mathcal{N}_i$ of FL services are active and require wireless bandwidth to carry out their training. These services are either newly initiated services in period $i$ or continuing services from the previous period. A FL service finishes and hence exits the wireless network when a certain termination criteria is satisfied (e.g., the training loss is below a threshold, the testing accuracy is above a threshold, or other convergence criterion), which usually varies across FL services and are pre-specified by the corresponding service provider. Therefore, a FL service may span multiple periods. The wall clock time (i.e. the number of periods) that a FL service takes to finish depends on the difficulty and other inherent characteristics of the service itself as well as how much wireless resource is allocated to this service in each period for which it stays and how this bandwidth is further allocated among its participating clients. In what follows, we first formulate the client-level (i.e., intra-service) bandwidth allocation problem and then describe the service-level (i.e., inter-service) bandwidth allocation problem.

\subsection{Intra-Service Bandwidth Allocation}
To understand how bandwidth allocation affects FL performance, let us consider a single representative FL service $n$ in one period (period index $i$ is dropped for conciseness). Suppose that this service is allocated with a bandwidth $b_n$ in this period, which is further allocated among its participating clients, the set of which is denoted by $\mathcal{K}_n$. For each client $k \in \mathcal{K}_n$, let $\phi_k$ be its computing speed, and $g^\text{ul}_k$ and $g^\text{dl}_k$ be the uplink and downlink wireless channel gains to the parameter server of service $n$, respectively, which are assumed to be invariant within a period. We consider a synchronized FL model for each FL service, where a number of FL rounds take place in a period. Nonetheless, different FL services do not have to be synchronized -- they learn at their own pace. See Figure \ref{fig:timeline} for an illustration.

\begin{figure}[ht]
	\centering
	\includegraphics[width=0.8\textwidth]{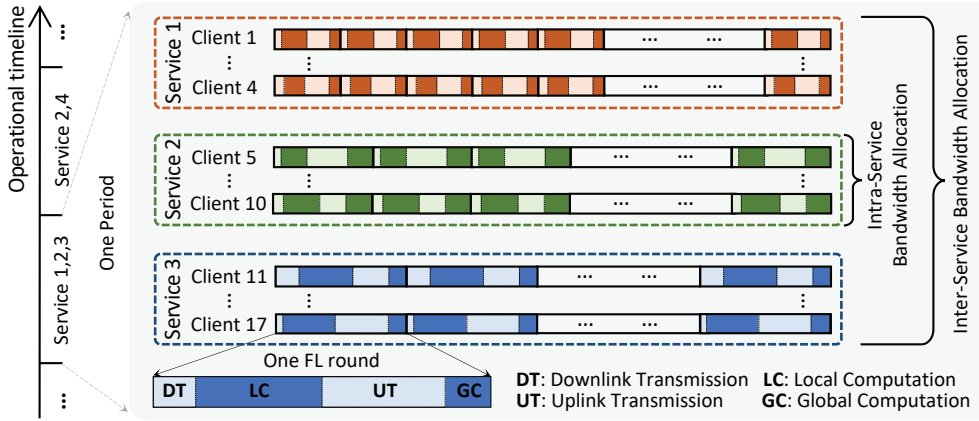}
	\caption{\small Bandwidth Allocation among Multiple FL Services.}
		\vspace{-0.1in}
	\label{fig:timeline}
\end{figure}

A FL round consists of four stages: download transmission, local computation, upload transmission and global computation:
\begin{itemize}
\item \textit{Download Transmission (DT)}. Each FL round starts with a DT stage in which each client $k$ downloads the current global model from its parameter server residing on the base station. Suppose client $k$ is allocated with bandwidth $b_{n,k}$, then its DT rate is $b_{n,k}\log_2(1+P_n g^\text{dl}_k/N_0)$ following Shannon's equation, where $P_n$ is the transmission power of parameter server $n$ and $N_0$ is the noise power. For notational convenience, we denote $\log_2(1+P_n g^\text{dl}_k/N_0)\triangleq r^\text{DT}_k$ as the DT base rate of client $k$. Let $s^\text{DT}_n$ be the download data size (e.g., the size of the global model), then the DT latency is $t^\text{DT}_{n,k} = s^\text{DT}_n/(b_{n,k} r^\text{DT}_k)$. 
\item \textit{Local Computation (LC)}. With the current global model, each client $k$ then updates its local model using its local dataset. Depending on the ML model complexity, the local dataset size and the number of episodes in local training, the per-round local computation workload is denoted by $w^\text{LC}_{n,k}$. Therefore, the LC latency of client $k$ is $t^\text{LC}_{n,k} = w^\text{LC}_{n,k}/\phi_k$. 
\item \textit{Upload Transmission (UT)}. Once local update is finished, client $k$ transmits the result to the parameter server $n$. Given the bandwidth $b_{n,k}$, its UT rate is $b_{n,k}\log_2(1+P_k g^{ul}_k/N_0)$, where $P_k$ is the transmission power of client $k$ and $N_0$ is the noise power. Again, for notational convenience, we denote $\log_2(1+P_k g^{ul}_k/N_0) \triangleq r^\text{UT}_k$ as the UT base rate of client $k$. Let $s^\text{UT}_n$ be the data size that has to be transmitted to the parameter server, then the UT latency of client $k$ is $t^\text{UT}_{n,k}=s^\text{UT}_n/(b_{n,k} r^\text{UT}_k)$. 
\item \textit{Global Computation (GC)}. Finally, once the local updates of all clients are received by parameter server $n$, the global model is updated. Let $w^\text{GC}_n$ be the global model update workload and $\phi_n$ be the computing speed of parameter server $n$, then the GC latency is $t^\text{GC}_n = w^\text{GC}_n/\phi_n$. 
\end{itemize}

Note that our framework is applicable to a vast set of FL algorithms (e.g., FedAvg, FedSGD) that can be chosen for service $n$. For instance, the downloaded/uploaded data may be the model itself, the compressed version of the model, or the model gradient information.  For the purpose of bandwidth allocation, it is sufficient to describe the FL service as a tuple $\langle s^\text{DT}_n, \{w^\text{LC}_{n,k}\}_{k\in\mathcal{K}_n}, s^\text{UT}_n, w^\text{GC}_n\rangle$. 

In synchronized FL, the parameter server updates the global model until it has received the local updates from all participating clients. Hence, the length of a FL round of service $n$ is determined by the total latency of the \textit{slowest} client, i.e. $t_n = \max_{k\in\mathcal{K}_n} (t^\text{DT}_{n,k} + t^\text{LC}_{n,k} + t^\text{UT}_{n,k} + t^\text{GC}_n)$. To minimize the FL round length $t_n$ of service $n$ so that more FL rounds can be executed in a period, one has to optimally allocate bandwidth $b_n$ among the clients of service $n$. Given $b_n$, the \textit{intra-service bandwidth allocation} problem can be formulated as
\begin{align}
    \min_{b_{n,1},..., b_{n,K}}~t_n(\{b_{n,k}\}_{k\in\mathcal{K}_n})~~~~~~
    \text{subject to}~ \sum_{k\in\mathcal{K}_n} b_{n,k} = b_n
    \label{clientBA}
\end{align}
Let $t^*_n(b_n)$ denote the optimal solution to Eqn. \eqref{clientBA}. Then the optimal FL frequency of service $n$ is $f^*_n(b_n) = 1/t^*_n(b_n)$, which is used to represent the FL speed of service $n$. Note that this means $T\cdot f^*_n(b_n)$ FL rounds can be performed in one period.

\subsection{Inter-Service Bandwidth Allocation}
In a period, multiple active FL services may be active and require wireless bandwidth to carry out learning. Since they share a total bandwidth $B$, how this bandwidth is allocated among different services will determine their achievable learning frequencies $f^*_n(b_n)$, thus the convergence speed in terms of the wall clock time. In this paper, we consider two scenarios depending on the goals of the FL service providers and how inter-service bandwidth allocation is implemented. In the first scenario, all FL service providers are \textit{cooperative}, and their goal is to maximize the FL performance of the overall system. Therefore, it is equivalent to the network operator solving a system-wide optimization problem. In the second scenario, the FL service providers are \textit{selfish} who care about only their own FL performance. As these service providers are competing for the limited bandwidth resource, addressing their incentive issues is crucial. In this paper, we design a fairness-adjusted multi-bid auction mechanism for the inter-service bandwidth allocation in this case. In the following two sections, we discuss these two scenarios separately. 

\section{Cooperative service providers}
In the cooperative service providers scenario, the network operator directly decides the bandwidth allocation to maximize the overall system performance. As in any multi-user network, bandwidth allocation for multi-service FL has to address both efficiency and fairness -- every active FL service should get a reasonable share of the bandwidth. Thus, we adopt the notion of \textit{proportional fairness} \cite{massoulie1999bandwidth}, a metric widely used in multi-user resource allocation, and aim to solve the following optimization problem:
\begin{align}
    &\max_{b_1, ..., b_N}~~\sum_{n=1}^N \log (1 + f^*_n(b_n))\nonumber\\
    \text{subject to}~~&\sum_{n=1}^N b_n = B~~\text{and}~~f^*_n(b_n)~\text{solves}~\eqref{clientBA}, \forall n
    \label{serviceBA}
\end{align}
where we drop the period index $i$ and let $N$ be the number of active FL services in the period for conciseness. The objective function adds a ``1'' inside the logarithmic to ensure that the function value is always non-negative. This change has very little impact on the final allocation since the frequency is often much larger than 1 in a period. Note also that the above inter-service bandwidth allocation problem Eqn.~\eqref{serviceBA} implicitly incorporates the intra-service problem as $f^*_n(b_n)$ is the solution to Eqn.~\eqref{clientBA}.

\subsection{Optimal Solution to the Intra-Service Problem}
We first investigate the optimal solution to the intra-service bandwidth allocation problem and see how it can be used to solve the inter-service problem. According to our system model and Eqn. \eqref{clientBA}, the intra-service bandwidth allocation is equivalent to
\begin{align}
    \min_{b_{n,1}, ..., b_{n,K}}~~&t_n\\
    \text{subject to}~~&t^\text{C}_{n,k} + \alpha_{n,k}/b_{n,k} \leq t_n\\
    &\sum_{k}b_{n,k} = b_n
\end{align}
where we let $t^\text{C}_{n,k} \triangleq t^\text{LC}_{n,k} + t^\text{GC}_n$ and $\alpha_{n,k} \triangleq s^\text{DT}_n/r^\text{DT}_k + s^\text{UT}_n/r^\text{UT}_k$ for notational convenience. Clearly, the optimal solution $t^*$ must satisfy
\begin{align}
    t^\text{C}_{n,k} + \alpha_{n,k}/b_{n,k} = t^*_n, \forall k 
\end{align}
Therefore, the optimal $t^*_n$ solves the following equality,
\begin{align}
    \sum_{k} \frac{\alpha_{n,k}}{t^*_n - t^\text{C}_{n,k}} = b_n
    \label{optT}
\end{align}
Although we do not have a closed-form solution of $t^*_n(b_n)$, a bi-section algorithm can be constructed to easily solve the above problem to obtain the optimal $t^*_n(b_n)$ and consequently the optimal frequency $f^*_n(b_n) = 1/t^*_n(b_n)$ as a function of $b_n$. Furthermore, the property of $f^*_n(b_n)$ can be characterized in the following lemma. 
\begin{lemma}
$f^*_n(b_n)$ is a differentiable, increasing and concave function for $b_n > 0$. 
\end{lemma}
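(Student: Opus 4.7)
The plan is to leverage the implicit characterization of $t^*_n(b_n)$ given by Eqn.~\eqref{optT}. Set $t_{\max}\triangleq\max_{k\in\mathcal{K}_n}t^\text{C}_{n,k}$ and define $G(t)\triangleq\sum_{k\in\mathcal{K}_n}\alpha_{n,k}/(t-t^\text{C}_{n,k})$ on $(t_{\max},\infty)$. Feasibility of the intra-service program forces $t^*_n(b_n)>t_{\max}$, and $G$ is smooth with $G'(t)<0$ on this interval, so $G$ is a smooth strictly-decreasing bijection onto $(0,\infty)$. Hence $t^*_n(b_n)=G^{-1}(b_n)$ is smooth and strictly decreasing, and $f^*_n(b_n)=1/t^*_n(b_n)$ is smooth and strictly increasing. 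This settles the differentiability and monotonicity claims.

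For concavity, the cleanest route is to swap the roles of $b_n$ and $f$ so that the problem reduces to the elementary fact that the inverse of a strictly increasing convex function is concave. Substituting $f=1/t$ turns Eqn.~\eqref{optT} into
\begin{equation*}
b_n \;=\; H(f)\;\triangleq\;\sum_{k\in\mathcal{K}_n}\frac{\alpha_{n,k}\,f}{1-f\,t^\text{C}_{n,k}},\qquad f\in\bigl(0,\,1/t_{\max}\bigr),
\end{equation*}
where each denominator $1-f\,t^\text{C}_{n,k}=(t-t^\text{C}_{n,k})/t$ is strictly positive on this interval. A one-line calculation gives
\begin{equation*}
\frac{d}{df}\!\left(\frac{f}{1-fc}\right)=\frac{1}{(1-fc)^2}>0,\qquad \frac{d^2}{df^2}\!\left(\frac{f}{1-fc}\right)=\frac{2c}{(1-fc)^3}\geq 0
\end{equation*}
for $c=t^\text{C}_{n,k}\geq 0$, so each summand, and hence $H$, is smooth, strictly increasing, and convex in $f$.

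The function $f^*_n$ is therefore $H^{-1}$. Differentiating the identity $H(f^*_n(b_n))=b_n$ twice gives $(f^*_n)'(b_n)=1/H'(f^*_n)>0$ and $(f^*_n)''(b_n)=-H''(f^*_n)/[H'(f^*_n)]^3\leq 0$, which is exactly the concavity claim. The step I expect to require the most care is not a single hard estimate but making sure every quantity stays inside the valid domain, particularly that $b_n\to 0^+$ corresponds to $f\to 0^+$ and $b_n\to\infty$ to $f\uparrow 1/t_{\max}$, so that $H$ is a bijection onto $(0,\infty)$ and the inverse-function computation above is justified. A fully self-contained alternative is to avoid the change of variable and implicitly differentiate Eqn.~\eqref{optT} twice, reducing concavity to the inequality $t^*_n\cdot(t^*_n)''\geq 2\,[(t^*_n)']^2$; this can be verified by rewriting $t^*_n=(t^*_n-t^\text{C}_{n,k})+t^\text{C}_{n,k}$ inside the resulting sum, but it is algebraically heavier than the inversion argument.
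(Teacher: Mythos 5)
Your proof is correct and follows essentially the same route as the paper's: both invert the relation in Eqn.~\eqref{optT} to view $b_n$ as the function $H(f)=\sum_k \alpha_{n,k}f/(1-f\,t^\text{C}_{n,k})$ of $f$, show this is increasing and convex on $(0,1/\max_k t^\text{C}_{n,k})$, and conclude by the inverse-function rule. Your second-derivative formula $(f^*_n)''=-H''/(H')^3$ is in fact the correct one (the paper's displayed expression for $f''_n$ carries a minor typo, with exponent $-2$ and a missing factor of $2$), though the sign conclusion is unaffected.
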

\begin{proof}
Let us consider the inverse function $b_n(f_n)$ defined by Eqn. \eqref{optT}. It is easy to see that for $f_n \in [0, 1/\max_{k} t^\text{C}_{n,k})$, $b_n(F_n)$ is a monotonically increasing function in $f_n$ with $b_n(0) = 0$ and $b_n(f_n) \to \infty$ as $f_n \to 1/\max_{k} t^{cp}_{n,k}$. Therefore, for $b_n \geq 0$, $f_n(b_n)$ is also monotonically increasing. The first-order derivative of $b_n(f_n)$ is
\begin{align}
    b'_n = \frac{d b_n}{d f_n} = \frac{d b_n}{d t_n}\frac{d t_n}{d f_n} = \sum_k \frac{\alpha_{n,k}}{(1 - t^\text{C}_{n,k} f_n)^2} > 0, \forall f_n \in [0, 1/\max_k t^\text{C}_{n,k})
\end{align}
Therefore, $f_n(b_n)$ is differentiable for $b_n \geq 0$ and
\begin{align}
    f'_n = \frac{d f_n}{d b_n} = \left(\sum_k \frac{\alpha_{n,k}}{(1 - t^\text{C}_{n,k} f_n)^2}\right)^{-1} > 0, \forall b_n \geq 0
\end{align}
The second-order derivative $f''_n$ can also be computed as follows:
\begin{align}
    f''_n = -\left(\sum_k \frac{\alpha_{n,k}}{(1 - t^{C}_{n,k} f_n)^2}\right)^{-2}\left(\sum_k \frac{\alpha_{n,k} t^\text{C}_{n,k}}{(1-t^\text{C}_{n,k}f_n)^3}\right) < 0, \forall B \geq 0
\end{align}
This proves that $f_n(b_n)$ is a concave function for $b_n \geq 0$.
\end{proof}
With Lemma 1, it is straightforward to see that the inter-service bandwidth allocation problem \eqref{serviceBA} is a convex optimization problem.
\begin{proposition}
The inter-service bandwidth allocation problem \eqref{serviceBA} is an equality-constrained convex optimization problem.
\end{proposition}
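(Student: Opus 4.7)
The plan is to verify two things: (i) the feasible set is defined by a single linear equality constraint (trivially nonnegative bandwidth could be appended, but the statement only asks for an equality-constrained convex program), and (ii) the objective $\sum_{n=1}^N \log(1+f^*_n(b_n))$ is concave in $(b_1,\dots,b_N)$. The equality-constraint part is immediate from $\sum_n b_n = B$, which is linear in the decision variables, so the real content is the concavity claim.

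For concavity, I would invoke Lemma 1 directly: each $f^*_n(b_n)$ is concave and (strictly) increasing for $b_n > 0$. Then $1 + f^*_n(b_n)$ is also concave (adding a constant preserves concavity) and takes values in $[1,\infty)$, so the logarithm is well defined. Now I apply the standard composition rule for concave functions: if $h:\mathbb{R}\to\mathbb{R}$ is concave, and $g:\mathbb{R}\to\mathbb{R}$ is concave and nondecreasing, then $g\circ h$ is concave. Taking $h(b_n) = 1 + f^*_n(b_n)$ (concave) and $g(x) = \log x$ (concave and increasing on $(0,\infty)$), it follows that $b_n \mapsto \log(1+f^*_n(b_n))$ is concave. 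Since each term depends only on its own coordinate $b_n$, the sum over $n$ is a separable concave function of $(b_1,\dots,b_N)$.

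Maximizing a concave function subject to a linear equality constraint is equivalent to minimizing a convex function subject to the same affine constraint, hence Eqn.~\eqref{serviceBA} is an equality-constrained convex optimization problem, completing the argument. No real obstacle is expected here; the only subtlety is being careful to cite the correct version of the composition rule (monotonicity of the outer function is what allows the inner function to be merely concave rather than affine), and to note that the implicit definition of $f^*_n$ through Eqn.~\eqref{optT} is already resolved by Lemma 1 so the objective is a legitimate explicit function of $b_n$.
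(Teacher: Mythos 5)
Your proof is correct and follows essentially the same route as the paper's: both rest on Lemma 1 for the concavity of $f^*_n$, the composition rule for a concave increasing outer function ($\log$) applied to a concave inner function ($1+f^*_n$), and the linearity of the constraint $\sum_n b_n = B$. Your version simply spells out the composition rule and the separability of the sum more explicitly than the paper does.
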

\begin{proof}
Because $f^*$ is concave, $\log$ is concave and increasing, the composition $\log (1+f^*)$ is also a concave function. Then it is straightforward to see that the problem is a concave maximization problem with an equality constraint. 
\end{proof}

\subsection{Distributed Algorithm for Inter-Service Bandwidth Allocation}
We now proceed with solving the inter-service bandwidth allocation problem. While various centralized algorithms, such the Newton's method, can efficiently solve the inter-service problem Eqn. \eqref{serviceBA} given the fact that it is a convex optimization problem, we prefer a distributed algorithm where individual FL service providers do not share their FL algorithm details and client-level information with each other or the network operator. This way reduces the communication overhead and preserves privacy of the client devices of individual FL service providers. Our algorithm is developed based on dual decomposition \cite{palomar2006tutorial} as follows. 

We first relax the total bandwidth constraint $\sum_n b_n = B$ to be $\sum_n b_n \leq B$, and then form the Lagrangian by relaxing the coupling constraint:
\begin{align}
    L(b_1, ..., b_N, \lambda) = \sum_{n} \log (1 + f^*_n(b_n)) - \lambda\left(\sum_n b_n - B\right) = \sum_n L_n(b_n, \lambda) + \lambda B
\end{align}
where $\lambda$ is the Lagrange multipier associated with the total bandwidth constraint, and $L_n(b_n, \lambda) = \log (1 + f^*_n(b_n)) - \lambda b_n$ is the Lagrangian to be maximized by service provider $n$. Such dual decomposition results in each service provider $n$ solving, for a given $\lambda$, the following problem
\begin{align}
    b^*_n(\lambda) = \arg\max_{b_n\geq 0} L_n(b_n, \lambda) = \arg\max_{b_n\geq 0}\left(\log (1 + f^*_n(b_n)) - \lambda b_n\right)
    \label{optBLambda}
\end{align}
where the solution is unique due to the strict concavity of $f^*_n$ according to Lemma 1. Specifically, to solve this maximization problem, we only need to solve its first-order condition,
\begin{align}
    {f_n^*}'(b_n)/(1 + f^*_n(b_n)) = \lambda
\end{align}
which can be converted to solve $f^*$ using
\begin{align}
    (1 + f^*_n) \sum_{k\in\mathcal{K}_n} \frac{ \alpha_{n,k}}{(1 - t^\text{C}_{n,k} f^*_n)^2} = \lambda^{-1}
    \label{optFreqLambda}
\end{align}
Clearly, the left-hand side is an increasing function of $f^*_n$ for $f^*_n \in [0, 1/\max_k t^\text{C}_{n,k})$ and thus, a simple bi-section algorithm can be devised to solve Eqn. \eqref{optFreqLambda} to obtain $f^*_n(\lambda)$. Then plugging $f^*_n(\lambda)$ (hence $t^*_n(\lambda)$) into Eqn. \eqref{optT} yields the optimal $b^*_n(\lambda)$. 

Let $g_n(\lambda) = \max_{b_n\geq 0} L_n(b_n, \lambda) = L_n(b^*_n(\lambda), \lambda)$ be the local dual function for service provider $n$. Then the master dual problem is
\begin{align}
    \min_\lambda g(\lambda) = \sum_n g_n(\lambda) + \lambda B~~~~\text{subject to}~~\lambda \geq 0
\end{align}
Since $b^*_n(\lambda)$ is unique, it follows that the dual function $g_n(\lambda)$ is differentiable and the following gradient method can be used to iteratively update $\lambda$:
\begin{align}
    \lambda(j+1) = \left[\lambda(j) - \gamma\left(B - \sum_n b^*_n(\lambda(j))\right)\right]^+
    \label{updateLambda}
\end{align}
where $j$ is the iteration index, $\gamma > 0$ is a sufficiently small positive step-size, and $[\cdot]^+$ denotes the projection onto the non-negative orthant. The dual variable $\lambda(j)$ will converge to the dual optimum $\lambda^*$ as $j \to \infty$. Since the duality gap for the inter-service problem Eqn. \eqref{serviceBA} is zero and the solution to Eqn. \eqref{optBLambda} is unique, the primal variable $b^*_n(\lambda(j))$ will also converge to the primal optimal variable $b^*_n$. 

Algorithm \ref{alg: distributed BA} summarizes the distributed inter-service bandwidth allocation (DISBA) algorithm. The algorithm works iteratively. In each iteration, the operator sends the current $\lambda(j)$ to all service providers. Then, each service providers solves for $b^*_n(\lambda(j))$ using its local information and sends the result to the network operator. The network operator finally updates $\lambda(j+1)$ for the next iteration's computation. The algorithm terminates until $\lambda$ converges. 

\begin{algorithm}
    \caption{Distributed Inter-Service Bandwidth Allocation (DISBA)} \label{alg: distributed BA}
    \begin{algorithmic}[1]
        \State \textbf{Input to Network Operator}: total bandwidth $B$, step size $\gamma$, convergence gap $\epsilon$
        \State \textbf{Input to service provider $n$}: FL service $n$ parameters $\langle s^\text{DT}_n, \{w^\text{LC}_{n,k}\}_{k\in\mathcal{K}_n}, s^\text{UT}_n, w^\text{GC}_n\rangle$, channel gains and computing speed of its clients $\mathcal{K}_n$. 
        \State \textbf{Initialization}: set $j = 0$ and $\lambda(0)$ equal to some non-negative value
        \While{$\lambda(j) - \lambda(j-1) > \epsilon$}
            \State Network Operator sends $\lambda(j)$ to all service providers
            \State Each service provider $n$ obtains $b^*_n(\lambda(j))$ by solving Eqn. \eqref{optBLambda} using bi-section
            \State Each service provider $n$ sends $b^*_n(\lambda(j))$ to Network Operator
            \State Network Operator updates $\lambda(j+1)$ according to Eqn. \eqref{updateLambda}
            \State $j \gets j + 1$
        \EndWhile
    \end{algorithmic}
\end{algorithm}

\section{Selfish service providers}
In the previous section, the distributed inter-service bandwidth allocation works by letting each FL service provider compute the allocated bandwidth $b^*_n(\lambda(j))$ given $\lambda(j)$. This, however, creates an opportunity for a selfish service provider to mis-report its computation result that favors itself but reduces the system performance as a whole. In fact, even if the inter-service bandwidth allocation problem \eqref{serviceBA} is solved in a centralized way, similar selfish behavior may still undermine the efficient system operation as a selfish service provider may mis-report its FL service and client parameters (e.g., FL workload, client computing power and channel gains etc.), which will alter the frequency function $f^*_n$ used at the operator side. With a wrong frequency function $f^*_n$, the operator will not be able to determine the \textit{true} optimal bandwidth allocation. 

In this section, we address the selfishness issue in inter-service bandwidth allocation by designing a multi-bid auction mechanism. This auction mechanism will ensure that the FL service providers are using their true FL frequency functions $f^*_n$ when making bandwidth bids. 

\subsection{Multi-bid Auction}
First, we describe the general rules of the  multi-bid auction mechanism. 

\subsubsection{Bidding}
At the beginning of each bandwidth allocation period, each service provider $n$ submits a set of $M$ bids $s_n = \{s^1_n, ..., s^M_n\}$. For each $m \in \{1, ..., M\}$, $s^m_n = (b^m_n, p^m_n)$ is a two-dimensional bid, where $b^m_n$ is the requested bandwidth and $p^m_n$ is the unit price that service provider $n$ is willing to pay to get the requested bandwidth $b^m_n$. Without loss of generality, we assume that bids are sorted according to the price such that $p^1_n \leq p^2_n \leq ... \leq p^M_n$. Let $S \in \mathbb{R}^+\times \mathbb{R}^+$ denote the set of multi-bids that a service provider can submit. 

\subsubsection{Bandwidth Allocation and Charges}
Once the network operator collects all multi-bids from all service providers, denoted by $s = \{s_n\}_{n \in \mathcal{N}}$, it computes and implements the inter-service bandwidth allocation $(b_1, ..., b_N)$. Each service provider $n$ then further allocates $b_n$ to its clients to perform FL. At the end of the period, the network operator determines the charges $(c_1, ..., c_N)$ for all service providers depending on the allocated bandwidth and the realized FL performance.

Now, a couple of issues remain to be addressed. First, how to compute the bandwidth allocation and determine the charges given the service provider-submitted multi-bids? Second, do the service providers have incentives to truthfully report their valuations of the bandwidth? These are the questions to be addressed in the next subsections.


\subsection{Market Clearing Prices with Full Information}
We first consider a simpler case where the service providers \textit{truthfully} report the \textit{complete} FL frequency function $f^*_n(b), \forall n$ to the network operator. This analysis will provide us with insights on how to design bandwidth allocation and charging rules in the more difficult multi-bid auction case. 

Recall that $f^*_n(b)$ is the optimal FL frequency of service $n$ if it has bandwidth $b$. Taking into account the price paid to obtain this bandwidth, the (net) utility of service provider $n$ is
\begin{align}
    u_n(b; p) = f^*_n(b) - p\cdot b
    \label{utility}
\end{align}
Now, if the bandwidth were sold at the unit price $p$, then service provider $n$ would buy $b_n(p) = \arg\max_b u_n(b; p)$ bandwidth in order to maximize its utility. We call $b_n(p)$ the \textbf{bandwidth demand function} (BDF), and it is easy to show that $b_n(p) = ({f^*_n}')^{-1}(p)$ by checking the first-order condition of Eqn. \eqref{utility}. On the other hand, if service provider $n$ requires a bandwidth $b$, then the service provider would pay a unit price no more than $p_n(b) = {f^*_n}'(b)$. We call $p_n(b)$ the \textbf{marginal valuation function} (MVF). 

\subsubsection{Market clearing price} With the complete information of $f^*_n(b)$ and hence BDF $b_n(p)$ for all service providers, the network operator can compute the \textbf{market clearing price} (MCP) $\rho$ so that $\sum_{n=1}^N b_n(\rho) = B$. One can prove that the MCP is unique and optimal in the sense that it maximizes the total (equivalently, average) FL frequency. 

\begin{proposition}
The market clearing price $\rho$ is unique and maximizes the total FL frequency $\sum_{n=1}^N f^*_n(b_n)$. 
\end{proposition}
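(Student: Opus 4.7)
The plan is to split the proof into two parts: existence and uniqueness of $\rho$, then optimality of the induced allocation with respect to $\sum_n f^*_n(b_n)$. Throughout, the key tool is Lemma 1: each $f^*_n$ is strictly concave and differentiable, so ${f^*_n}'$ is a strictly decreasing continuous function on $[0,\infty)$. Moreover, from the explicit formula for $f'_n$ in the proof of Lemma 1,
\begin{equation*}
{f^*_n}'(0) = \Bigl(\sum_{k\in\mathcal{K}_n}\alpha_{n,k}\Bigr)^{-1} < \infty, \qquad {f^*_n}'(b)\to 0 \text{ as } b\to\infty,
\end{equation*}
the second limit holding because as $b_n\to\infty$ at least one term $(1-t^{\text{C}}_{n,k}f_n)^{-2}$ blows up.

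For existence and uniqueness of $\rho$, I first clarify the BDF: the maximization in \eqref{utility} is over $b\ge 0$, so by the KKT condition the BDF is $b_n(p)=({f^*_n}')^{-1}(p)$ when $0<p<{f^*_n}'(0)$ and $b_n(p)=0$ when $p\ge {f^*_n}'(0)$. Strict monotonicity of ${f^*_n}'$ makes $b_n(p)$ continuous and non-increasing in $p$, and strictly decreasing on $(0,{f^*_n}'(0))$. The aggregate demand $D(p)=\sum_n b_n(p)$ therefore inherits continuity, monotonic decrease, and strict monotonicity on $(0,\max_n {f^*_n}'(0))$. Using the limit ${f^*_n}'(b)\to 0$, I get $D(p)\to\infty$ as $p\to 0^+$, while $D(p)=0$ for $p$ above $\max_n {f^*_n}'(0)$. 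The intermediate value theorem then delivers a unique $\rho>0$ with $D(\rho)=B$.

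For optimality, consider the concave program
\begin{equation*}
\max_{b_1,\dots,b_N\ge 0}\ \sum_{n=1}^N f^*_n(b_n)\quad\text{subject to}\quad \sum_{n=1}^N b_n = B.
\end{equation*}
Concavity of each $f^*_n$ (Lemma 1) makes the objective concave, the feasible set is a simplex (nonempty and compact), and Slater's condition holds, so the KKT conditions are necessary and sufficient. They require a single multiplier $\mu\ge 0$ such that for every $n$ either $b_n>0$ and ${f^*_n}'(b_n)=\mu$, or $b_n=0$ and ${f^*_n}'(0)\le \mu$, together with $\sum_n b_n=B$. Setting $\mu=\rho$ and $b_n=b_n(\rho)$ exactly reproduces the definition of the BDF on the two branches above, and the market-clearing equation gives the budget constraint. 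Hence the MCP allocation satisfies KKT and is the unique optimum.

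The only subtle point, and the part I would write most carefully, is making sure the KKT characterization lines up with the two branches of the BDF; this is why I insist on deriving $b_n(p)=0$ for $p\ge {f^*_n}'(0)$ explicitly before invoking the KKT conditions, rather than pretending $({f^*_n}')^{-1}$ is defined on all of $\mathbb{R}_+$. Everything else is a routine application of strict concavity and the intermediate value theorem.
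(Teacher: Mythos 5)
Your proof is correct and follows essentially the same route as the paper's: uniqueness of $\rho$ from monotonicity of the aggregate demand function, and optimality via the KKT conditions of the concave program $\max \sum_n f^*_n(b_n)$ subject to $\sum_n b_n = B$. You are in fact somewhat more careful than the paper, which asserts that ${f^*_n}'$ and the BDF are \emph{increasing} (by concavity of $f^*_n$ they are decreasing, which is what actually drives the uniqueness argument), does not address existence of $\rho$, and omits the boundary branch $b_n(p)=0$ for $p\ge {f^*_n}'(0)$ that you handle explicitly.
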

\begin{proof}
According to Lemma 1, ${f^*_n}'(b)$ is an increasing function. Therefore, the BDF, which is the inverse function of ${f^*_n}'(b)$ is also increasing. As a result, there exists a unique solution to the increasing function $\sum_{n=1}^N b_n(p_n) = B$. 

To show that $\bar{p}$ maximizes $\sum_{n=1}^N f^*_n(b_n)$, consider the following maximization problem
\begin{align}
    \max_{b_1, ..., b_N}~~\sum_{n=1}^N f^*_n(b_n)~~~\text{subject to}~~\sum_{n=1}^N b_n = B
\end{align}
This is clearly a convex optimization problem. Consider its Karush-Kuhn-Tucker conditions. In particular, the stationarity condition is
\begin{align}
    \nabla \sum_{n=1}^N f^*_n(b_n) + \lambda \nabla (\sum_{n=1}^N b_n - B) = 0
\end{align}
where $\lambda$ is the Lagrangian multiplier associated with the constraint. The solution requires
\begin{align}
    {f^*_n}'(b_n) = \lambda, \forall n
\end{align}
Together with the feasibility constraint, this is equivalent to imposing a homogeneous market clearing price. \end{proof}

Because $b_n(p)$ is a monotonically decreasing function in $p$, a bi-section algorithm can be easily designed to find the unique market clearing price so that $\sum_{n=1}^N b_n(\rho) = B$. 

\subsubsection{Fairness-adjusted costs}
One major issue with the above pricing scheme is that it ignores fairness among the service providers: although it maximizes efficiency in terms of the average FL frequency according to Proposition 2, it is possible that the average FL frequency is maximized at an operating point where a few service providers are allocated with most of the bandwidth while some service providers obtain very little bandwidth. In this paper, we design and incorporate a fairness-adjusted charging scheme into the above pricing scheme. The payment of service provider $n$ now consists of two parts as follows:
\begin{itemize}
    \item The first part of the payment depends on the amount of bandwidth $b_n$ allocated to the service provider $n$, and the unit price $p$ set by the operator. Specifically, this payment is $p\cdot b_n$.
    \item The second part of the payment depends on the realized FL frequency $f_n$ of service provider $n$. Specifically, service provider $n$ will be charged a \textbf{fairness-adjusted cost} of $\alpha\cdot (f_n - \log (1 + f_n))$ at the end of the period once $f_n$ has been realized, where $\alpha \in [0, 1]$ is a tunable parameter. 
\end{itemize}

With these payments, service provider $n$'s utility becomes
\begin{align}
    u_n(b; p) = f^*_n(b) - p\cdot b - \alpha\cdot (f^*_n(b) - \log (1 + f^*_n(b))) = g_n(b) - p\cdot b
    \label{newutility}
\end{align}
where $g_n(b) \triangleq (1-\alpha) f^*_n(b) + \alpha \log (1 + f^*_n(b))$. Comparing this new utility function Eqn. \eqref{newutility} with Eqn. \eqref{utility}, we make the following remarks. First, the fairness-adjusted cost essentially replaces $f^*_n(b)$ with $g_n(b)$. The decision problem remains largely the same except that now we have a different benefit function. Second, in the new utility function Eqn. \eqref{utility}, given any allocated bandwidth $b$, it is still in the service provider's interest to perform the optimal client-level bandwidth allocation to maximize $f_n(b)$. This is because $g_n(b)$ is an increasing function in $f_n(b)$ for $\alpha \in [0,1]$. Therefore, we can directly write $g_n(b)$ as a function of the optimal FL frequency $f^*_n(b)$. Third, to charge the fairness-adjusted cost, the network operator does not need to know the exact function $f^*_n(b)$. Rather, it only has to know the realized FL frequency $f_n$ at the end of the current period. This is key to achieving fairness in multi-bid auction where FL service providers do not report the complete FL frequency function $f^*_n(b)$. 

We call $d_n(p) = (g'_n)^{-1}(p)$ the modified bandwidth demand function (mBDF). Likewise, we call $q_n(b) = g'_n(b)$ the modified marginal valuation function (mMVF). The network operator can similarly compute the modified market clearing price (mMCP) $\zeta$ so that $\sum_{n=1}^N d_n(\zeta) = B$. Using a similar argument that proves Proposition 2, one can prove Proposition 3 as follows.
\begin{proposition}
The mMCP $\zeta$ is unique and the resulting bandwidth allocation $(b_1, ..., b_N)$ maximizes $\sum_{n=1}^N \left[(1-\alpha)f^*_n(b_n) + \alpha \log (1 + f^*_n(b))\right]$. \end{proposition}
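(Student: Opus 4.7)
The plan is to mirror the structure of the proof of Proposition 2, replacing $f^*_n$ with the adjusted benefit $g_n(b) = (1-\alpha) f^*_n(b) + \alpha \log(1 + f^*_n(b))$ and then invoking the same convex-optimization / KKT argument. The main preparation step is to verify that $g_n$ inherits the right monotonicity and concavity properties from $f^*_n$, so that (i) the modified bandwidth demand function $d_n(p) = (g'_n)^{-1}(p)$ is well-defined and strictly decreasing, and (ii) the aggregate maximization problem is concave.

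First, I would establish the structural properties of $g_n$. By Lemma 1, $f^*_n$ is differentiable, strictly increasing, and strictly concave on $b_n \geq 0$. Since $\log(1+\cdot)$ is strictly increasing and strictly concave, and the composition of an increasing concave function with an increasing concave function is concave, $\log(1+f^*_n(b))$ is strictly concave and increasing in $b$. For $\alpha \in [0,1]$, $g_n$ is a nonnegative combination of two strictly concave increasing differentiable functions, hence itself strictly concave, increasing, and differentiable. Consequently $g'_n$ is continuous and strictly decreasing, so $d_n(p) = (g'_n)^{-1}(p)$ is well-defined and strictly decreasing in $p$ on its range.

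Next, I would prove uniqueness of $\zeta$. The aggregate demand $\sum_{n=1}^N d_n(p)$ is strictly decreasing in $p$ as a sum of strictly decreasing functions, and its range covers $(0, \infty)$ in the relevant limits. Therefore the equation $\sum_{n=1}^N d_n(\zeta) = B$ admits a unique solution $\zeta$.

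Finally, I would establish optimality by considering
\begin{equation}
\max_{b_1,\dots,b_N} \sum_{n=1}^N g_n(b_n) \quad \text{subject to} \quad \sum_{n=1}^N b_n = B, \; b_n \geq 0.
\end{equation}
Since each $g_n$ is concave and the constraint is affine, this is a convex program with zero duality gap and a unique maximizer. Writing the KKT stationarity condition with Lagrange multiplier $\lambda$ yields $g'_n(b_n) = \lambda$ for every $n$, i.e.\ $b_n = d_n(\lambda)$; feasibility then forces $\sum_n d_n(\lambda) = B$, which is precisely the definition of $\zeta$. Thus the bandwidth allocation induced by the mMCP coincides with the unique maximizer, which by definition of $g_n$ is $\sum_{n=1}^N \bigl[(1-\alpha) f^*_n(b_n) + \alpha \log(1+f^*_n(b_n))\bigr]$.

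I expect the only subtle point to be justifying that $g_n$ is strictly concave and that the maximizer in $d_n(p)$ is interior (so that the first-order condition actually characterizes it), since once these are in hand the remainder of the argument is a direct transcription of the Proposition 2 proof with the modified objective.
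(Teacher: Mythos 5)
Your proposal is correct and follows essentially the same route as the paper: the paper likewise observes that $\log(1+f^*_n(b))$ is concave and increasing as a composition, concludes that $g_n$ is concave and increasing, and then invokes the KKT/market-clearing argument of Proposition~2 with $f^*_n$ replaced by $g_n$. Your version simply spells out the uniqueness of $\zeta$ via strict monotonicity of the aggregate demand, which the paper leaves implicit.
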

\begin{proof}
Because $f^*_n(b)$ is a concave increasing function, $\log (1 + f^*_n(b))$ is also concave and increasing. This further shows that $g_n(b)$ is concave and increasing. Following similar arguments in the proof of Theorem 2 proves the bandwidth allocation as a result of mMCP maximizes $\sum_{n=1}^N g_n(b_n)$. 
\end{proof}
The parameter $\alpha$ makes a tradeoff between efficiency and fairness. On the one hand, setting $\alpha = 0$ reduces the problem to the total FL frequency maximization problem. On the other hand, setting $\alpha = 1$ achieves proportional fairness among the service providers.

\subsection{Bandwidth Allocation and Charging Rules}
Now, we are ready to describe the bandwidth allocation and charging rules in fairness-adjusted multi-bid auction. In this subsection, each service provider $n$ submits only a multi-bid $s_n = (s^1_n, ..., s^M_n)$ instead of the complete FL frequency function $f^*_n(b)$. However, we will assume that the service providers are \textit{truthfully} submitting their bids, which will be proven indeed true in the next subsection. Specifically, we say that a bid $s^m_n = (b^m_n, p^m_n)$ is truthful if the bandwidth demand $b^m_n$ and the price $p^m_n$ that FL service provider $n$ is willing to pay satisfy the mBDF because it reveals FL service provider $n$'s true valuation of bandwidth after taking into consideration the fairness-adjusted costs. A multi-bid is truthful if all bids are truthful. 
\begin{definition}
(Truthful Multi-bid) A multi-bid $s_n = (s^1_n, ..., s^M_n)$ is truthful if $\forall m$, $s^m_n = (b^m_n, p^m_n)$ is such that $p^m_n = g_n'(b^m_n)$.
\end{definition}

The network operator does not know the BDF (and hence the mBDF) of each FL service provider $n$ because it does not have access to the FL frequency function $f^*_n$. Nonetheless, suppose service provider $n$ submitted a truthful multi-bid $s_n$, then the operator can compute a \textbf{pseudo-mBDF} using these bids to have some idea of the actual mBDF. Specifically, given the submitted multi-bid $s_n$, a left-continuous step function can be used to describe the pseudo-mBDF as follows,
\begin{equation}
\bar{d}_n(p) = \left\{
\begin{array}{ll}
       0, &\text{if}~p^M_n < p\\
      \max_{1 \leq m \leq M} \{b^m_n: p^m_n \geq p\}, &\text{otherwise}
\end{array}\right.
\end{equation}
Essentially, the pseudo-mBDF uses $b^m_n$ to approximate the bandwidth demand for prices in the range $(p^{m-1}_n, p^m_n]$. Similarly, the operator can also construct a \textbf{pseudo-mMVF} (pseudo-MVF), an approximation of service provider $n$'s actual mMVF using the submitted multi-bid, as follows,
\begin{equation}
\bar{q}_n(b) = \left\{
\begin{array}{ll}
       0, &\text{if}~b^1_n < b\\
      \max_{1 \leq m \leq M} \{p^m_n: b^m_n \geq b\}, &\text{otherwise}
\end{array}\right.
\end{equation}
In other words, the pseudo-mMVF uses $p^m_n$ to approximate the marginal value for bandwidth allocation in the range $[b^m_n, b^{m+1}_n)$. We illustrate the pseudo-mBDF and pseudo-mMVF in Figure \ref{fig:BDF_MVF}. 

\begin{figure}[htb]
	\centering
	\includegraphics[width=0.6\textwidth]{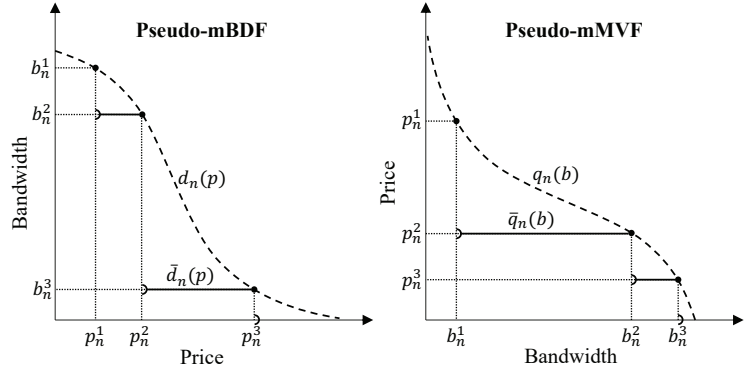}
	\caption{\small Pseudo-mBDF and Pseudo-mMVF.}
		\vspace{-0.1in}
	\label{fig:BDF_MVF}
\end{figure}

The \textbf{aggregated pseudo-mBDF} is the sum of pseudo-mBDFs of all FL service providers:
\begin{align}
\bar{d}(p) = \sum_{n=1}^N \bar{d}_n(p)
\end{align}
The \textbf{pseudo-mMCP} $\bar{\zeta}$ is the largest possible price so that the aggregated pseudo-mBDF exceeds the total available bandwidth, i.e.,
\begin{align}
    \bar{\zeta} = \sup\{p: \bar{d}(p) > B\}
\end{align}
This implies that reducing the mMCP by just a little bit will result in the supply (i.e., the total available bandwidth $B$) being no greater than the demand. Because every individual pseudo-mBDF function is a step function with $K$ steps, the aggregated pseudo-mBDF is also a step function with at most $NK$ steps. Therefore, the complexity of computing $\bar{\zeta}$ is at most $O(NK)$.

Next, we describe our bandwidth allocation and charging rules. For notational convenience, we denote $y(x^+) = \lim_{z \to x, z > x} y(x)$ when this limit exists for a function $y: \mathbb{R} \to \mathbb{R}$ and all $x \in \mathbb{R}$. 

\subsubsection{Bandwidth allocation}
With the pseudo-mMCP $\bar{\zeta}$, our bandwidth allocation rule is as follows: if FL service provider $n$ submits the multi-bid $s_n$ (and thereby declares the associated functions $\bar{d}_n$ and $\bar{q}_n$), then it receives bandwidth $b_n(s_n, s_{-n})$, with
\begin{align}
    b_n(s_n, s_{-n}) = \bar{d}_n(\bar{\zeta}^+) + \frac{\bar{d}_n(\bar{\zeta}) - \bar{d}_n(\bar{\zeta}^+)}{\bar{d}(\bar{\zeta}) - \bar{d}(\bar{\zeta}^+)}\left(B - \bar{d}(\bar{\zeta}^+)\right)
\end{align}
In other words: (1) Each FL service provider $n$ receives an amount of bandwidth it asks for at the lowest price $\bar{\zeta}^+$ for which supply exceeds the pseudo-bandwidth demand. (2) If all bandwidth is not allocated yet, the surplus $B - \bar{d}(\bar{\zeta}^+)$ is shared among service providers. This share is done proportionally to $\bar{d}_n(\bar{\zeta}) - \bar{d}_n(\bar{\zeta}^+)$ as we notice that $\bar{d}(\bar{\zeta}) - \bar{d}(\bar{\zeta}^+) = \sum_{n=1}^N \left(\bar{d}_n(\bar{\zeta}) - \bar{d}_n(\bar{\zeta}^+)\right)$, and ensures that all bandwidth is allocated. 

\subsubsection{Charging}
Given the submitted multi-bids $s$, each service provider $n$ is charged a payment $c_n(s)$ as follows,
\begin{align}
    c_n(s_n, s_{-n}) = \sum_{j\neq n}\int_{b_j(s)}^{b_j(s_{-n})} \bar{q}_j(b) db + \alpha\cdot (f^*_n(b_n) - \log (1 + f^*_n(b_n)))
\end{align}
The first term on the right-hand side is based on the \textit{exclusion-compensation} principle in second-price auction mechanisms \cite{vickrey1961counterspeculation}: service provider $n$ pays so as to cover the ``social opportunity cost'', namely the loss of utility it imposes on all other service providers by its presence. The second term on the right-hand side is the fairness-adjusted cost, which is charged at the end of each period after the actual FL frequency is realized and observed. 

Considering both the achieved FL frequency and the payment, FL service provider $n$'s utility is therefore
\begin{align}
u(s) = f^*_n(b_n(s)) - c_n(s)
\end{align}

\subsection{Incentives of Truthful Reporting}
In the previous subsection, we assumed that the every service provider truthfully submits its bid. Now, we prove that this assumption indeed ``approximately'' holds under the designed bandwidth allocation and charging rules.

We first study the individual rationality of the designed mechanism. 
\begin{definition}
A mechanism is said to be \textbf{individual rational} if no service provider can be worse off from participating in the auction than if it had declined to participate.  
\end{definition}

\begin{proposition}
If FL service provider $n$ submits a truthful multi-bid $s_n$, then $u_n(s) \geq 0$.
\end{proposition}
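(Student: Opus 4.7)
My plan is a VCG-style sandwich argument that bounds the claim through the intermediate quantity $\int_0^{b_n(s)}\bar{q}_n(b)\,db$. Substituting the definition of $c_n$ into $u_n(s) = f^*_n(b_n(s)) - c_n(s)$ and using $g_n(b) = (1-\alpha)f^*_n(b) + \alpha\log(1+f^*_n(b))$ absorbs the fairness-adjusted cost back into $g_n$, so $u_n(s) \geq 0$ is equivalent to
\begin{equation*}
g_n(b_n(s)) \;\geq\; \sum_{j\neq n}\int_{b_j(s)}^{b_j(s_{-n})}\bar{q}_j(b)\,db.
\end{equation*}
I would establish this by proving two sub-inequalities that meet at $\int_0^{b_n(s)}\bar{q}_n(b)\,db$ and chaining them by transitivity.

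The upper side $g_n(b_n(s)) \geq \int_0^{b_n(s)}\bar{q}_n(b)\,db$ follows from truthful bidding combined with concavity of $g_n$ (which in turn follows from Lemma~1 because $\log(1+\cdot)$ is concave increasing, so $g'_n$ is non-increasing). Truthfulness $p^m_n = g'_n(b^m_n)$ together with the sorted-price convention forces bids to be reverse-sorted in bandwidth, so on each step $(b^{m+1}_n, b^m_n]$ the pseudo-mMVF satisfies $\bar{q}_n(b) = g'_n(b^m_n) \leq g'_n(b)$ (using $b \leq b^m_n$). Since the mechanism never allocates more than $\bar{d}_n(\bar{\zeta}) \leq b^1_n$, this pointwise bound covers $[0, b_n(s)]$; integrating it and using $g_n(0) = 0$ (because $f^*_n(0) = 0$) gives the upper side.

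The lower side reduces to showing that $(b_1(s),\dots,b_N(s))$ maximizes the declared pseudo-welfare $\sum_j \int_0^{b_j}\bar{q}_j(b)\,db$ subject to $\sum_j b_j = B$ and $b_j \geq 0$. Bids at prices strictly above $\bar{\zeta}$ carry the highest marginal pseudo-values and are fully honored, while the residual supply $B - \bar{d}(\bar{\zeta}^+)$ is distributed among the bids sitting at price exactly $\bar{\zeta}$, where the marginal pseudo-value is constant. Evaluating the optimum against the feasible alternative $b'_n = 0$, $b'_j = b_j(s_{-n})$ for $j \neq n$ (feasible because the without-$n$ allocation already exhausts $B$) and rearranging yields precisely the lower side. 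The main obstacle I anticipate is making this welfare-maximization step fully airtight: because the pseudo-mBDFs are left-continuous step functions and rationing happens at the jump at $\bar{\zeta}$, one has to verify that the particular proportional tie-breaking rule still lies in the argmax. This is handled by observing that any feasible split of the residual among bids exactly at $\bar{\zeta}$ produces the same total pseudo-welfare, so welfare maximization holds in the ``$\geq$'' direction required by the sandwich, independent of the specific rationing rule.
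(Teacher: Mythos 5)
Your proof is correct, but it takes a different route from the paper: the paper does not prove the key inequality itself, it verifies that $g_n$ satisfies [Assumption~1] of Maill\'e and Tuffin \cite{maille2004multibid} (differentiability, $g_n(0)=0$, $g_n'$ positive, non-increasing and continuous) and then imports the bound
$\sum_{j\neq n}\int_{b_j(s)}^{b_j(s_{-n})} \bar{q}_j(b)\,db \leq g_n(b_n(s))$
wholesale as their Property~10, which is exactly the inequality your sandwich argument establishes from scratch. Your reduction of $u_n(s)\geq 0$ to that inequality (absorbing the fairness-adjusted cost into $g_n$) matches the paper's observation that the cited bound is equivalent to $c_n(s)\leq f^*_n(b_n(s))$. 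What your version buys is a self-contained argument that makes explicit where each hypothesis enters: truthfulness plus concavity of $g_n$ gives the pointwise bound $\bar{q}_n(b)\leq g_n'(b)$ and hence the upper half via $g_n(0)=0$; the VCG comparison against the feasible alternative $(0,(b_j(s_{-n}))_{j\neq n})$ gives the lower half once you know the allocation rule maximizes declared pseudo-welfare. What the paper's version buys is brevity, at the cost of asking the reader to confirm that the fairness-adjusted benefit $g_n$ (rather than $f^*_n$) really slots into the cited framework --- a point your derivation handles transparently. The one place you should tighten the write-up is the welfare-maximization step: you correctly flag the rationing at the jump of the aggregated pseudo-mBDF at $\bar{\zeta}$, and your resolution (all marginal units at price exactly $\bar{\zeta}$ contribute the same pseudo-marginal value, so any split of the residual $B-\bar{d}(\bar{\zeta}^+)$ attains the same welfare) is the right one; stated as an explicit lemma that the allocation rule lies in $\arg\max\{\sum_j\int_0^{b_j}\bar{q}_j : \sum_j b_j\leq B,\ b_j\geq 0\}$, the argument is airtight.
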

\begin{proof}
By Lemma 1, it is straightforward to see that $g_n(b)$ has the following properties:
\begin{itemize}
    \item $g_n(b)$ is differentiable and $g_n(0) = 0$
    \item $g'_n(b)$ is positive, non-increasing and continuous
    \item $\exists \gamma_n > 0$, $\forall b \geq 0$, $g'_n(b) = 0$ $\Rightarrow$ $\forall \tilde{b} < b$, $g'_n(b) \leq g'_n(\tilde{b}) - \gamma_n (b - \tilde{b})$. 
\end{itemize}
Therefore, $g_n(b)$ satisfies [Assumption 1, \cite{maille2004multibid}]. According to [Property 10, \cite{maille2004multibid}], we have
\begin{align}
    \sum_{j\neq n}\int_{b_j(s)}^{b_j(s_{-n})} \bar{q}_j(b) db \leq g_n(b_n(s_n, s_{-n}))
\end{align}
which is equivalent to $c_n(s_n, s_{-n}) \leq f^*_n(b_n(s_n, s_{-n}))$. Therefore, $u(s) \geq 0$. 
\end{proof}

Next, we show that truthful reporting is approximately \textbf{incentive compatible}, i.e., a service provider cannot do much better than simply reveal its true valuation. 
\begin{proposition}
Consider any truthful multi-bid $s_n$ for service provider $n$, and any other multi-bid $\tilde{s}_n \neq s_n$, $\forall s_{-n}$, we have
\begin{align}
    u_n(s_n, s_{-n}) \geq u_n(\tilde{s}_n, s_{-n}) - \Delta_n
\end{align}
where
\begin{align}
     \Delta_n = \max_{0\leq m \leq M}\int_{d_n(p^{m+1}_n)}^{d_n(p^m_n)}(q_n(b) - p^m_n)d b
\end{align}
with $p^{M+1}_n = q_n(0)$ and $p^0_n = p_0$. 
\end{proposition}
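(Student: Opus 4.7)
The plan is to reduce the claim to an approximate-incentive-compatibility result for exclusion-compensation multi-bid auctions applied to the surrogate benefit function $g_n$. First, substituting the charging rule into the utility,
$$u_n(s) = f^*_n(b_n(s)) - \sum_{j\neq n}\int_{b_j(s)}^{b_j(s_{-n})}\bar{q}_j(b)\,db - \alpha\bigl(f^*_n(b_n(s)) - \log(1+f^*_n(b_n(s)))\bigr),$$
which rearranges to
$$u_n(s) = g_n(b_n(s)) - \sum_{j\neq n}\int_{b_j(s)}^{b_j(s_{-n})}\bar{q}_j(b)\,db.$$
This is precisely the VCG-style multi-bid utility with true valuation $g_n$ and true marginal valuation $q_n=g'_n$. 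By the three regularity properties of $g_n$ verified in the proof of Proposition~4 (differentiability with $g_n(0)=0$, positive non-increasing continuous derivative, and the linear lower-bound on the slope once $g'_n$ vanishes), the general multi-bid framework of \cite{maille2004multibid} applies directly.

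Next, I would invoke the companion incentive-compatibility property to [Property 10, \cite{maille2004multibid}]. That property says that when the true marginal-valuation curve $q_n$ is replaced by the staircase pseudo-mMVF $\bar{q}_n$ coming from a truthful multi-bid $s_n$, the utility gain from any deviation $\tilde{s}_n$ is bounded above by the largest ``cell'' area between $q_n$ and $\bar{q}_n$. Concretely, on the bandwidth interval $[d_n(p^{m+1}_n),d_n(p^m_n)]$ the staircase $\bar{q}_n$ is constantly $p^m_n$ while $q_n$ increases from $p^m_n$ to $p^{m+1}_n$; the ``triangular'' area between these two curves is exactly the integral appearing inside the maximum in $\Delta_n$. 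Taking the maximum over $m\in\{0,\ldots,M\}$ with the boundary conventions $p^{M+1}_n=q_n(0)$ and $p^0_n=p_0$ captures the worst possible bandwidth shift that a non-truthful bid could exploit, yielding $u_n(s_n,s_{-n})\geq u_n(\tilde{s}_n,s_{-n})-\Delta_n$.

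The main obstacle is verifying that the specific allocation rule used here --- in particular the tie-breaking surplus distribution proportional to $\bar{d}_n(\bar{\zeta})-\bar{d}_n(\bar{\zeta}^+)$ at the pseudo-mMCP $\bar{\zeta}$ --- matches the allocation rule in \cite{maille2004multibid} closely enough that its incentive bound applies without modification. If the match is exact, the conclusion follows immediately. Otherwise, a short case analysis is needed: one splits on whether $b_n(s_n,s_{-n})$ and $b_n(\tilde{s}_n,s_{-n})$ lie in the same staircase cell of $\bar{q}_n$ (in which case the exclusion-compensation term absorbs the deviation exactly, as in standard second-price arguments) or in adjacent cells (in which case the gap is controlled by a single triangular cell area, hence by $\Delta_n$). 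Combined with the rearranged utility expression above, this case analysis completes the proof.
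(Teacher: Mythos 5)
Your proposal is correct and takes essentially the same route as the paper, whose entire proof is the one-line citation ``The proof follows [Proposition 2, \cite{maille2004multibid}].'' Your explicit rearrangement $u_n(s) = g_n(b_n(s)) - \sum_{j\neq n}\int_{b_j(s)}^{b_j(s_{-n})}\bar{q}_j(b)\,db$, showing that the fairness-adjusted cost converts the problem into the standard multi-bid auction with surrogate valuation $g_n$, together with the verification of the regularity conditions on $g_n$ (which the paper only carries out in Proposition 4), supplies exactly the justification the paper leaves implicit for why that citation applies.
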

\begin{proof}
The proof follows [Proposition 2, \cite{maille2004multibid}].
\end{proof}

The above proposition shows that if service provider $n$ submits a truthful multi-bid $s_n$, then every other multi-bid $\tilde{s}_n$ necessarily corresponds to an increase of utility no larger than $\Delta_n$. In other words, a truthful bidding brings service provider $n$ the best utility possible up to a gap $\Delta_n$. Importantly, this value does not depend on the number of other service providers or the multi-bids they submit. In the game theoretic terminology, the situation where all service providers submit truthful multi-bids is an \textit{ex post $\Delta$-Nash equilibrium}, where $\Delta = \max_n \Delta_n$, in the sense that no service provider could have improved its utility by more than $\Delta$ if it had submitted a different multi-bid. 

\subsection{An Uniform Multi-Bidding Example}
To conclude the multi-bid auction mechanism design, we illustrate a uniform multi-bidding approach as an example of how to decide the multi-bid of an individual service provider. Instead of having the service provider submitting both prices and bandwidth requests, the operator can announce $M$ prices $(p^1_n, ..., p^M_n)$ to service provider $n$ and let service provider $n$ report its requested bandwidth $(b^1_n, ..., b^M_M)$ at these price points. This way, the operator has a better control over how the service providers make multi-bids to avoid multi-bids that may result in a large $\Delta_n$, which may reduce service provider's incentives to truthfully report. Because the operator does not know the demand function of service provider $n$, a natural approach is to uniformly distribute these $M$ prices in the range $[p_0, p^\text{max}_n]$ where $p^\text{max}_n$ is the largest price at which the service provider may still request a positive amount of bandwidth. Specifically, 
\begin{align}
    p^\text{max}_n = p_n(0) = {f^*_n}'(0) = \left(\sum_{k=1}^{K_n} \alpha_{n,k} \right)^{-1} = \left(\sum_{k=1}^{K_n} (\frac{s^\text{DT}_n}{r^\text{DT}_k} + \frac{s^\text{UT}_n}{r^\text{UT}_k}) \right)^{-1} 
\end{align}
Assume that the network operator has prior knowledge $\underline{K}_n$, $\underline{s}^\text{DT}_n$, $\underline{s}^\text{UT}_n$, $\bar{r}^\text{DT}_n$ and $\bar{r}^\text{UT}_n$ on the lower/upper bounds on the parameters, then $p^\text{max}_n$ can be upper bounded by
\begin{align}
p^\text{max}_n \leq \underline{K}_n^{-1}\cdot \left( \frac{\underline{s}^\text{DT}_n}{\bar{r}^\text{DT}_n} + \frac{\underline{s}^\text{UT}_n}{\bar{r}^\text{UT}_n}\right)^{-1} \triangleq \bar{p}^\text{max}_n
\end{align}
Thus, the operator can set the uniform prices as
\begin{align}
    p^m_n = p_0 + m\cdot \frac{\bar{p}^\text{max}_n - p_0}{M + 1}, \forall m = \{1, ..., M\}
\end{align}
Note that there is an intrinsic trade-off on the choice of $M$. On the one hand, a large $M$ allows the pseudo-BDF and pseudo-MVF to more accurately reflect the true BDF and MVF at an increased complexity and signaling overhead. On the other hand, a smaller $M$ makes multi-biding easier but the discrepancy between the pseudo functions and the true functions will introduce a larger performance loss.

\section{Simulations}
In this section, we conduct simulations to evaluate the performance of the proposed methods. 

\subsection{Simulation Setup}
The simulated wireless network adopts an OFDMA system with a total bandwidth of $B = 10$ MHz. The period length is set as $T = 20 s$. The number of clients of a FL service is drawn from a Normal distribution with mean 25. In every period, a new FL task may start following a scheduled plan, which is defined by a Poisson distribution with the mean interval $p_\text{arrive}$. By tuning $p_\text{arrive}$, we adjust the FL service demand, and a smaller $p_\text{arrive}$ will more likely lead to more concurrent FL services in a period as an FL service often lasts multiple periods. Each FL service has a pre-determined target training accuracy, and when the accuracy reaches the target, the FL service terminates and exits the wireless network. The clients' wireless channel gain is modeled as independent free-space fading where the average path loss is from a Normal distribution with different mean and variance in different circumstances. The variance of the complex white Gaussian channel noise is set as $10^{-12}$. For each client, the local training time is uniformly randomly drawn from  $[0.01, 0.05]$ s. We fix the global aggregation time to be $1 \times 10^{-5}$. We consider typical neural network sizes in the range of $[0.2, 0.5]$ Mbits. The upload transmission power is uniformly randomly between 0.05 and 0.15 W, and the download transmission power is uniformly randomly between 0.1 and 0.3 W.

\subsection{Convergence of DISBA in the Cooperative Case}
We first illustrate the convergence behavior of DISBA in the cooperative FL service provider case in a representative period with 5 concurrent FL services. These services have 10, 12, 14, 16, 18 clients, respectively. In Figure \ref{fc}, we show the computed FL frequency for each service provider before convergence. As Figure \ref{bc} shows, the bandwidth allocation quickly converges to the optimal allocation for a convergence tolerance gap $\epsilon = 1e-3$. Eventually, the resulting FL frequencies of these FL services in this period are reported in Table \ref{table0}.  We further show in Table \ref{table1} the computation time of DISBA for different values of the tolerance gap and step size. The time values are measured on a desktop computer with Intel Core i5-9400 2.9GHz GPU and 16GB memory.

\begin{figure}[htbp]
\centering
\begin{minipage}[t]{0.48\textwidth}
\centering
\includegraphics[width=7.5cm]{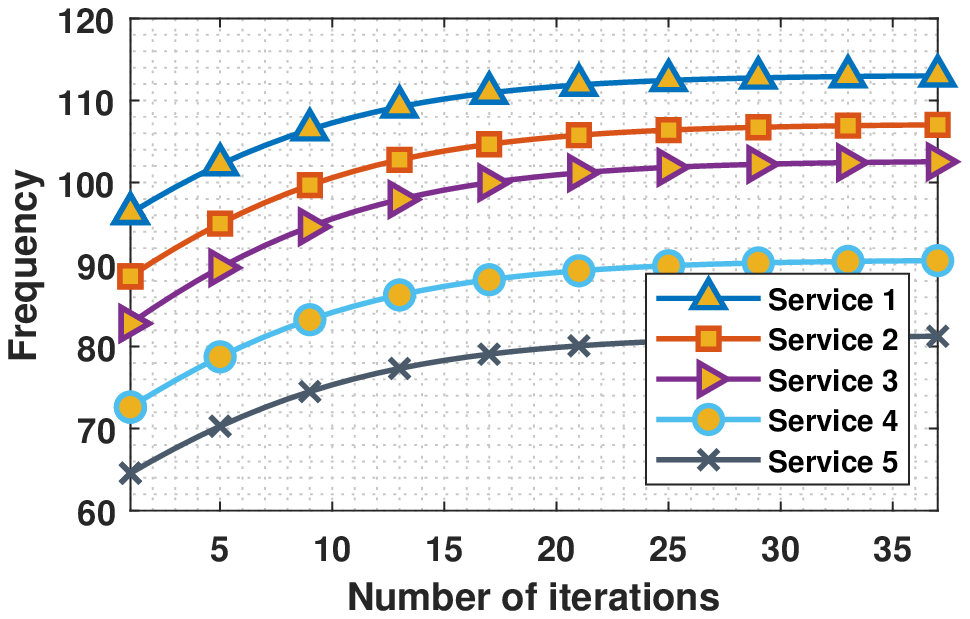}
\setlength{\abovecaptionskip}{0pt}
\caption{\label{fc} Frequency of Each Service before Convergence}
\end{minipage}
\begin{minipage}[t]{0.48\textwidth}
\centering
\includegraphics[width=7.5cm]{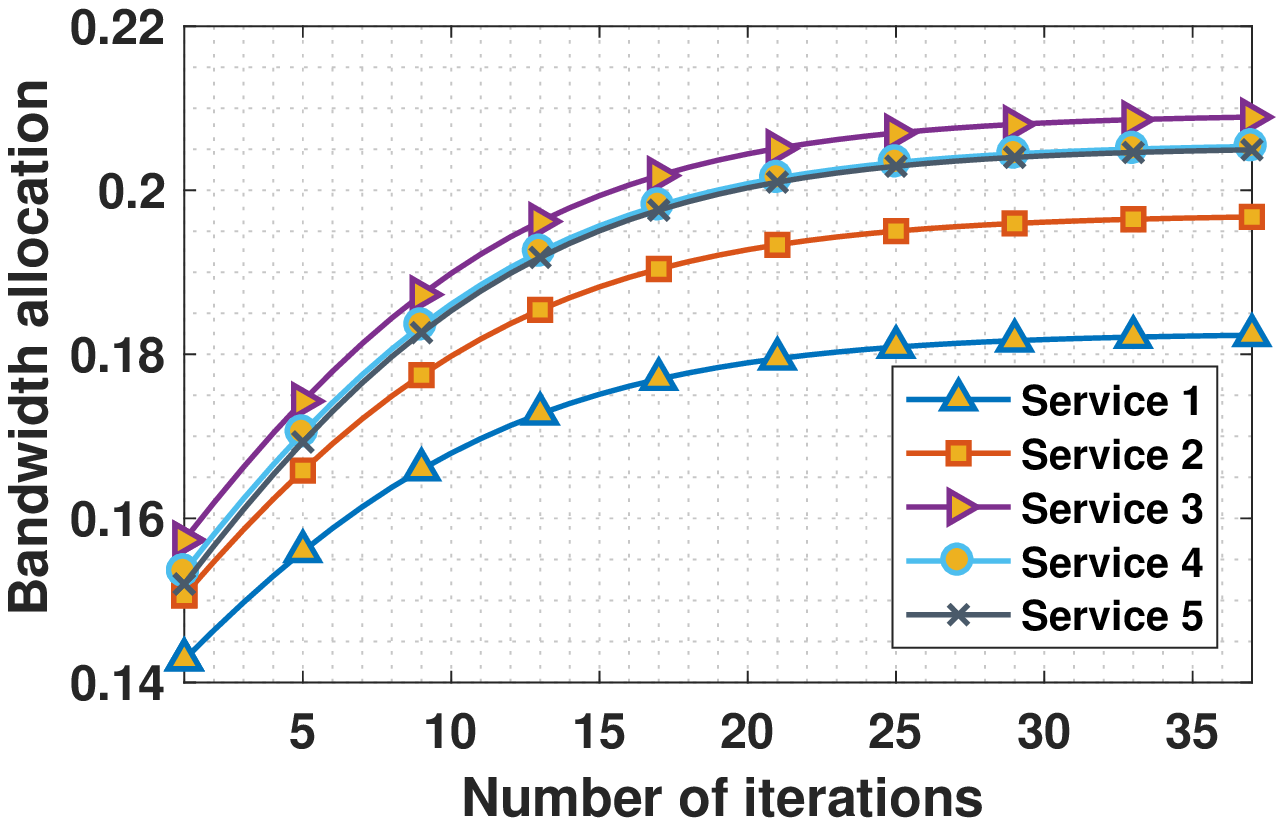}
\setlength{\abovecaptionskip}{0pt}
\caption{\label{bc}Bandwidth of Each Service before Convergence}
\end{minipage}
\end{figure}

\begin{table}[ht]
\centering
\begin{tabular}{|c c c c |} 
 \hline
 Service Index & Number of Clients & Bandwidth Ratio & Frequency \\ 
 \hline
1 & 10 & 0.182 & 113\\ 
2 & 12 &  0.196 & 107 \\
3 & 14 &  0.209 &  102.6  \\
4 & 16 &  0.205 & 90.4  \\
5 & 18 & 0.205 & 81.2 \\
 \hline
\end{tabular}
\vspace{0.1 in}
\caption{Resulted Bandwidth Allocation and Frequency of Each FL Service (Cooperative)}
\label{table0}
\end{table}

\begin{table}[ht]
\centering
\begin{tabular}{|c c c c|} 
 \hline
 Tolerated Gap & Step Size & \# of Iterations & Time(s) \\ 
 \hline
 1e-3 & 0.1 & 131 & 0.332 \\ 
 1e-3 & 0.5 & 37 & 0.094 \\
 5e-3 & 0.1 & 72 & 0.169 \\
 5e-3 & 0.5 & 26 & 0.069 \\
 \hline
\end{tabular}
\vspace{0.1 in}
\caption{Computational Complexity for the Cooperative Provider Case}
\label{table1}
\end{table}

\subsection{Fairness-adjusted Multi-bid Auction in the Selfish Case}
We perform fairness-adjusted multi-bid auction in the same representative period as in the last subsection, with $M = 5$ and $\alpha = 0.5$. The pseudo-mBDFs of the FL service providers and the aggregated pseudo-mBDF are illustrated in Figures \ref{pd} and \ref{pc}, respectively. The pseudo-MCP is also shown in Figure \ref{pc}. Table \ref{table1.5} reports the resulting bandwidth allocation and achieved FL frequency. 

\begin{figure}[htbp]
\centering
\begin{minipage}[t]{0.48\textwidth}
\centering
\includegraphics[width=7.5cm]{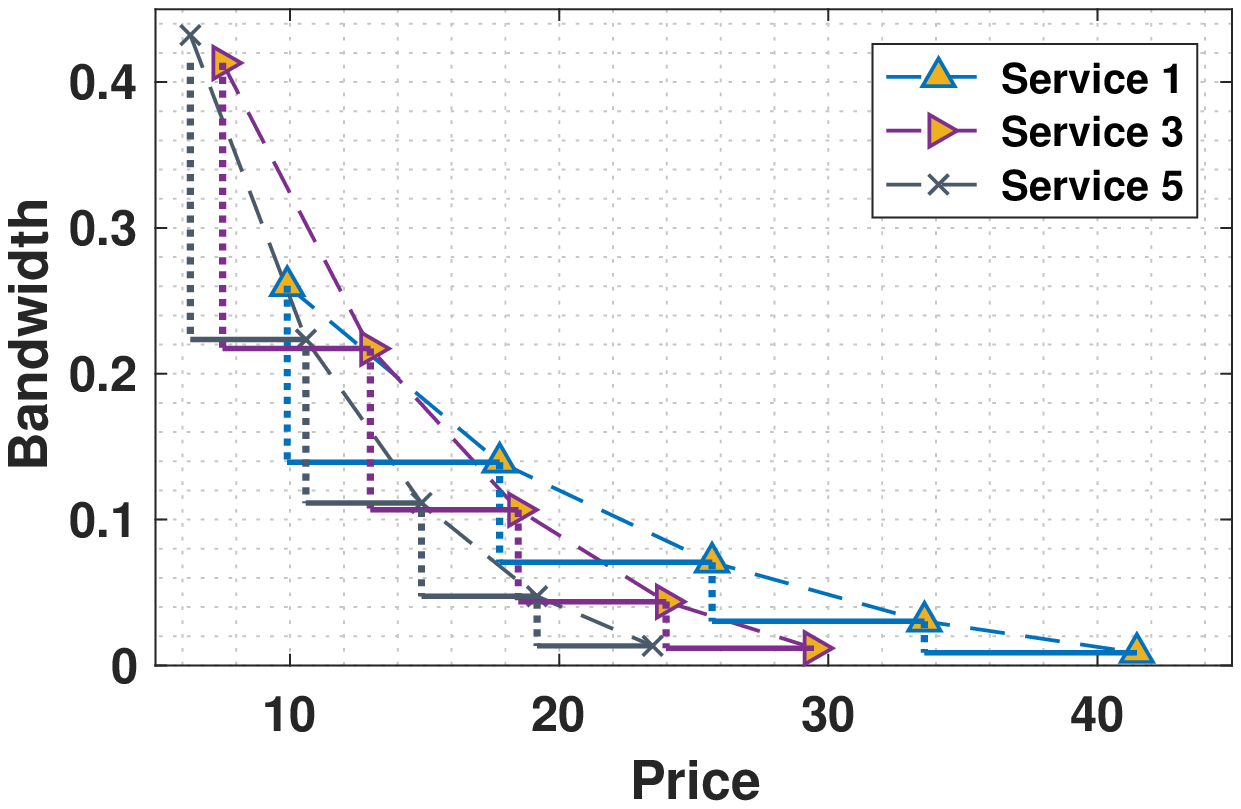}
\setlength{\abovecaptionskip}{0pt}
\caption{\label{pd} Pseudo-mBDF of Individual FL Services}
\end{minipage}
\begin{minipage}[t]{0.48\textwidth}
\centering
\includegraphics[width=7.5cm]{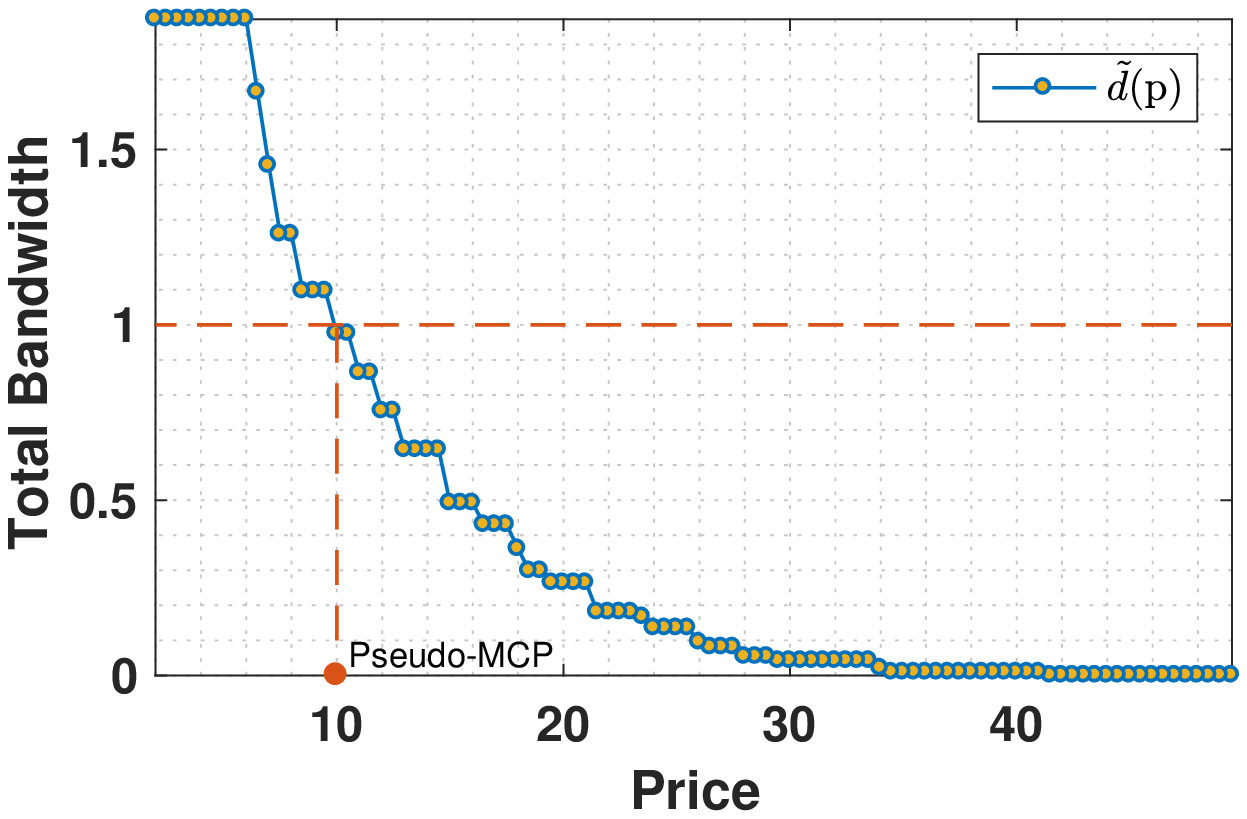}
\setlength{\abovecaptionskip}{0pt}
\caption{\label{pc}Aggregated Pseudo-mBDF and Pseudo-MCP}
\end{minipage}
\end{figure}

\begin{table}[ht]
\centering
\begin{tabular}{|c c c c |} 
 \hline
 Service Index & Number of Clients & Bandwidth Ratio & Frequency \\ 
 \hline
1 & 10 & 0.164 & 105.82 \\ 
2 & 12 & 0.177 &  99.52 \\
3 & 14 &  0.217 &  105.46 \\
4 & 16 &   0.218 &   94.4 \\
5 & 18 & 0.223  &  86.56 \\
 \hline
\end{tabular}
\vspace{0.1 in}
\caption{Optimal Bandwidth and Frequency of Each Service (Selfish)}
\label{table1.5}
\end{table}

As we briefly mentioned in Section V, there is a trade-off when selecting the number of bids $M$. On the one hand, a larger $M$ increases the computational complexity for searching for the pseudo-MCP and determining the eventual bandwidth allocation. On the other hand, a larger $M$ improves the precision of the pseudo-MCP, thereby improving the allocation performance. In Figure \ref{error}, we demonstrate the overall performance by varying $M$. As can be seen, as $M$ increases, the overall performance will increase while each FL service provider needs to submit more bids to the server which will cause transmission delays and data backlogs.

\begin{figure}[htbp]
\centering
\includegraphics[width=7.5cm]{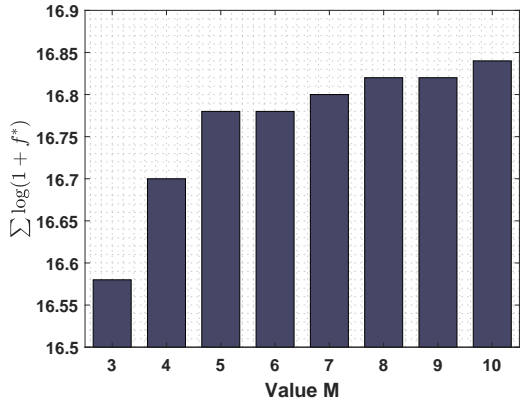}
\caption{\label{error}Overall Performance in the Selfish Service Providers Case with Different $M$}
\end{figure}

The parameter $\alpha$ plays an important role in the selfish owner case, which makes a tradeoff between efficiency and fairness. With a larger $\alpha$, the whole system sees fairness as more important, and conversely, the whole system is more concerned with the overall efficiency. The market clearing price is reflected in Figure \ref{cpa} and the overall utility is shown in Figure \ref{tua}. With the increase of $\alpha$, the market clearing price and the total utility will decrease, which can be treated as a concession to achieve fairness between different FL services.

\begin{figure}[htbp]
\centering
\begin{minipage}[t]{0.48\textwidth}
\centering
\includegraphics[width=7.5cm]{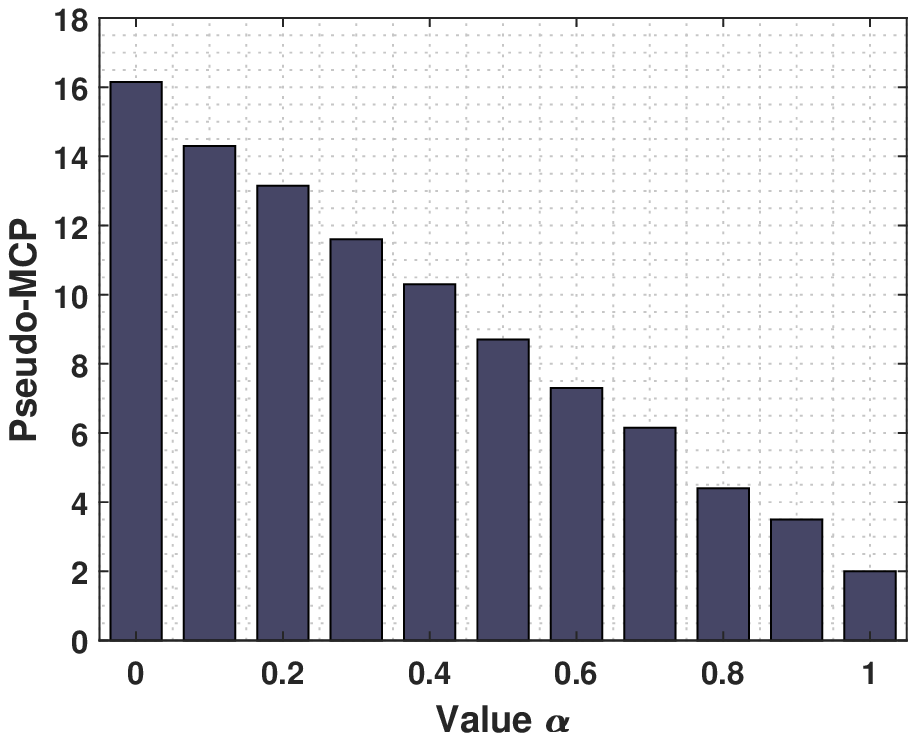}
\setlength{\abovecaptionskip}{0pt}
\caption{\label{cpa}  Pseudo-MCP with Different $\alpha$}
\end{minipage}
\begin{minipage}[t]{0.48\textwidth}
\centering
\includegraphics[width=7.5cm]{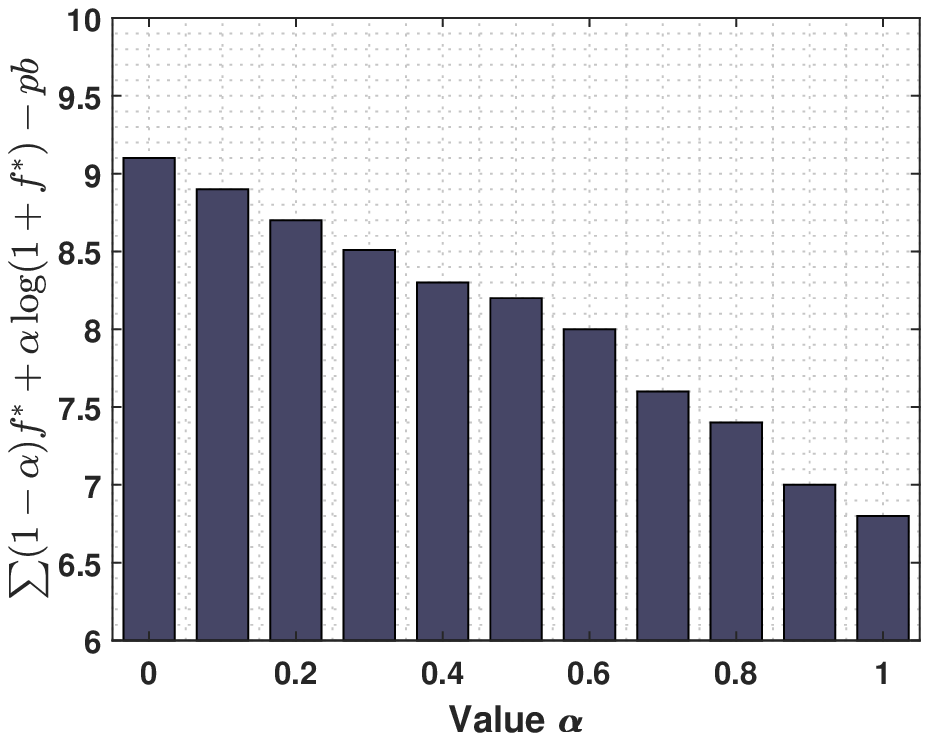}
\setlength{\abovecaptionskip}{0pt}
\caption{\label{tua} Total Utility with Different $\alpha$}
\end{minipage}
\end{figure}

\subsection{Performance Comparison}
In the following experiments, we compare our proposed algorithms with three benchmark algorithms. 
\begin{itemize}
    \item \textbf{Equal-Client (EC)}: Bandwidth is equally allocated to the clients. Therefore, each client gets a bandwidth of $B/\sum_n K_n$.
    \item \textbf{Equal-Service (ES)}: Bandwidth is equally allocated to the FL services. That is, each FL service gets a bandwidth of $B/N$. However, each FL service provider still performs the optimal intra-service bandwidth allocation among its clients.
    \item \textbf{Proportional (PP)}: Each FL service obtains a bandwidth that is proportional to the number of its client. That is, FL service $n$ obtains a bandwidth of $\frac{K_n}{\sum_j K_n} B$. This bandwidth is further allocated among its clients following the optimal intra-service bandwidth allocation. 
\end{itemize}

We start by comparing the proposed algorithms with benchmarks in the per-period setting. The overall performance is shown in Figure \ref{per-case}. In this setting, there are five FL services with a random number of clients drawn from a Normal distribution with mean 20 and variance 10 and random channel conditions drawn from a  Normal distribution with mean  85  and variance  15, and the result is averaged over 20 runs. As can be seen, our DISBA algorithm for the cooperative case (labeled as Coop) has the best performance, and the auction mechanism for the selfish case (labeled as Self) also outperforms the other benchmarks. Although ES and PP also perform the intra-service bandwidth allocation, the heterogeneity of the client number and channel conditions render them suboptimal. 

\begin{figure}[htbp]
\centering
\includegraphics[width=8cm]{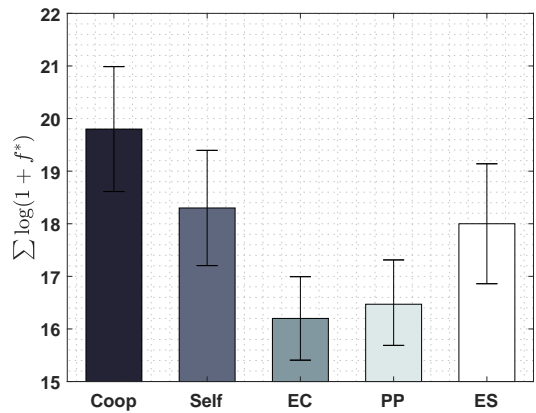}
\caption{\label{per-case} Per-period FL Performance of Different Algorithms}
\end{figure}

Because FL is a long-term process, we further investigate the long-term performance of the proposed algorithms. In the long-term setting, 10 FL services join the wireless network at different times controlled by the $p_\text{arrive}$-parameterized Poisson process and the FL service will be removed from the wireless network when its test accuracy has converged. Although the convergence of FL is complexly affected by many factors including the adopted FL algorithm, dataset and the selected clients, we assume that each of these 10 FL services require 2000 FL rounds, which is a typical value observed in the literature \cite{konevcny2016federated}, to reach convergence in order to provide a meaningful comparison of the algorithms in a controlled environment. Whenever a FL service has been run for 2000 rounds, it exits the system. 

Figure \ref{long-case1} illustrates the average duration (in terms of the number of periods) of all FL services by running different algorithms for $p_\text{arrive} = 5$, where the client number of a FL service is drawn from a Normal distribution with mean 25 and variance 15 and the channel condition of a FL service is drawn from a  Normal  distribution  with mean 85 and variance 15. The results are averaged over 20 runs. We can see that the proposed algorithms achieve the smallest average duration compared to the benchmarks, confirming their fast FL convergence even in the long-run. 

\begin{figure}[htbp]
\centering\includegraphics[width=8cm]{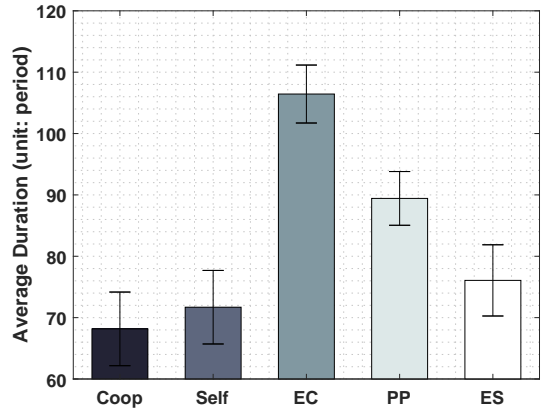}
\caption{\label{long-case1}Average Duration Period of FL Services}
\end{figure}

Next, we study the impact of the client number heterogeneity (which reflects the FL service size heterogeneity) on the performance of different algorithms. To this end, the client number of a FL service is drawn from a Normal distribution with mean 25 and we change the variance between 0 and 15 to adjust the heterogeneity degree. The result is shown in Figure \ref{long-client}: as the variance increases (i.e. a higher degree of heterogeneity), the mean of the average duration decreases, while the standard deviation of average duration increases. This is understandable because a higher degree of heterogeneity causes wireless bandwidth to be more unevenly distributed among the FL services, thereby degrading the overall FL performance. Notably, the performance gain of our proposed algorithms increases as the variance increases, which demonstrates the superior ability of our algorithms to handle the heterogeneous case.

\begin{figure}[htbp]
\centering\includegraphics[width=8cm]{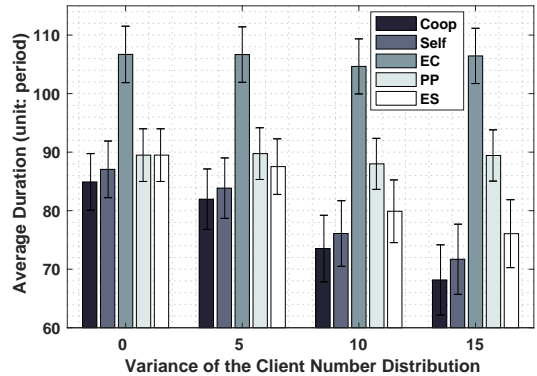}
\caption{\label{long-client}Impact of Client Number Heterogeneity}
\end{figure}

Furthermore, we also investigate the impact of the channel condition heterogeneity on the FL performance. In these simulations, the average channel condition of a FL service is drawn from a Normal distribution with mean 85 and we change the variance between 0 and 15 to adjust the heterogeneity degree. The channel conditions of clients of this FL are further drawn from a Normal distribution with a mean being the instantiated average channel condition. In Figure \ref{long-channel}, we observe a similar phenomenon as in Figure \ref{long-client}, which further confirms the advantage of adopting our proposed algorithms.

\begin{figure}[htbp]
\centering\includegraphics[width=8cm]{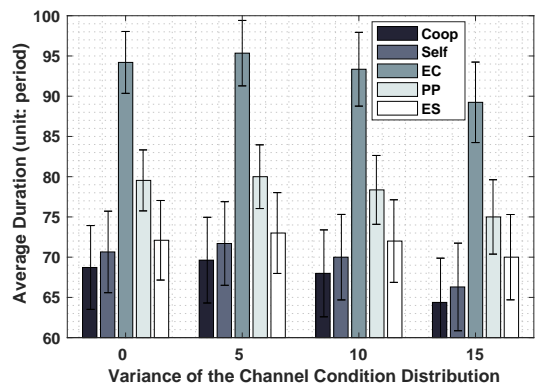}
\caption{\label{long-channel}Impact of the Channel Condition Heterogeneity}
\end{figure}

Finally, we study the influence of the mean arrival interval parameter  $p_\text{arrive}$ on the resulting average FL duration. in Figure \ref{long-case1_avg}, with the increasing of $p_\text{arrive}$,  the average duration of the FL services decreases. This is because when $p_\text{arrive}$ is small, many FL services pile up and co-exist in the wireless network, thereby reducing the wireless bandwidth an individual FL service can receive. 


\begin{figure}[htbp]
\centering
\includegraphics[width=7cm]{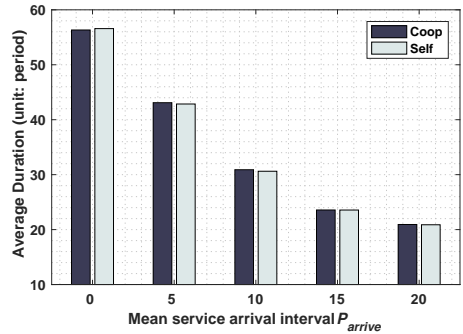}
\setlength{\abovecaptionskip}{0pt}
\caption{\label{long-case1_avg} Average Duration With Varying $P_{arrive}$}
\end{figure}

\section{Conclusion}
This paper studied a bandwidth allocation problem for multiple FL services in a wireless network, which has not been well studied in the literature. The considered problem consists of two interconnected subproblems, intra-service resource allocation, and inter-service resource allocation. By solving these problems, we optimally allocate bandwidth resources to multiple FL services and their corresponding clients to speed up the training process and meanwhile guarantee fairness for both cooperative and selfish FL service providers cases. Our method has shown superior performance compared to the benchmarks. However, there are several future research works that can be done to extend the impact of this work. For example, this paper takes FL frequency as the key metric to be optimized, but the true performance of FL is affected by the dataset, federated optimization algorithm, and many others. In addition, when a client can simultaneously participate in multiple FL services, resource allocation has to consider both the wireless bandwidth and client computing resources. 
\ifCLASSOPTIONcaptionsoff
  \newpage
\fi

\bibliographystyle{IEEEtran}
\bibliography{bibligraphy}

%

%




\end{document}